\newtheorem{theorem}{Theorem}[section]
\newtheorem{corollary}{Corollary}[theorem]
\newtheorem{lemma}[theorem]{Lemma}
\title{Ordinal Approximation for Social Choice, Matching,\\ and Facility Location Problems given Candidate Positions}
\author{Elliot Anshelevich and Wennan Zhu}
\begin{document}

\maketitle

\begin{abstract}
In this work we consider general facility location and social choice problems, in which sets of agents $\mathcal{A}$ and facilities $\mathcal{F}$ are located in a metric space, and our goal is to assign agents to facilities (as well as choose which facilities to open) in order to optimize the social cost. We form new algorithms to do this in the presence of only {\em ordinal information}, i.e., when the true costs or distances from the agents to the facilities are {\em unknown}, and only the ordinal preferences of the agents for the facilities are available. The main difference between our work and previous work in this area is that while we assume that only ordinal information about agent preferences in known, we know the exact locations of the possible facilities $\mathcal{F}$. Due to this extra information about the facilities, we are able to form powerful algorithms which have small {\em distortion}, i.e., perform almost as well as omniscient algorithms but use only ordinal information about agent preferences. For example, we present natural social choice mechanisms for choosing a single facility to open with distortion of at most 3 for minimizing both the total and the median social cost; this factor is provably the best possible. We analyze many general problems including matching, $k$-center, and $k$-median, and present black-box reductions from omniscient approximation algorithms with approximation factor $\beta$ to ordinal algorithms with approximation factor $1+2\beta$; doing this gives new ordinal algorithms for many important problems, and establishes a toolkit for analyzing such problems in the future.
\end{abstract}

\section{Introduction}

Many important problems involve assigning agents to facilities. For example, assigning patients to hospitals, students to universities, people to houses, etc. The target of assignment problems is usually to minimize social cost or maximize social welfare. When we consider the social cost of assignment problems, it is natural to assume the agents prefer facilities that are ``closer" to them in some sense, thus the social cost of an agent is often represented by the distance between the agent and the facility it is assigned to. Besides the cost of distances, there are many other cost functions and constraints for different problems; for example, in the capacitated facility assignment problem, each facility has a maximum number of agents it can accommodate. %A valid assignment is assigning each agent to a facility, without violating the capacity constraints. The goal of the capacitated facility assignment problem is to minimize the total social cost, which is the sum of distances from each agent to the facility it is assigned to. Another example is the facility location problem, in which each facility has an opening cost. The cost function for an assignment is the sum of opening cost of the facilities that has agent(s) assigned to them, plus the sum of distances from each agent to its assigned facility. And the goal of the facility location problem is to minimize the cost function.

In this work we consider general facility location problems, in which sets of agents $\mathcal{A}$ and facilities $\mathcal{F}$ are located in a metric space, and our goal is to assign agents to facilities (as well as choose which facilities to open) so that agents are assigned to facilities which are close to them. For example, $\mathcal{F}$ may be possible locations for opening new stores, and the goal may be that all agents have a store near them, or that the sum of agent distances to the stores they are assigned to is small, etc. This setting also captures many social choice problems, in which the facilities correspond to candidates, and the goal would be to choose a single candidate (and assign all agents to this candidate) so that the distances from the agents to the chosen candidate are small. Here the distances correspond to {\em spatial preferences}, i.e., the metric space represents the ideological space in which a more preferred candidate would be closer to me; see \cite{enelow1984spatial,anshelevich2015approximating} for discussion of such spatial preferences in social choice. Our setting also captures matching and many related problems, in which we would open all facilities, but are only able to assign one agent to each facility, thus forming a matching between agents and facilities; facilities here could correspond to houses or items, for example.

%Motivate from facilities and store locations, since if social choice of candidates, then these mechanisms are too complicated.

%Many real-life problems involve assigning agents to facilities. For example, assigning patients to hospitals, students to universities, people to houses, etc. The target of assignment problems is usually to minimize social cost or maximize social welfare. When we consider the social cost of assignment problems, it is natural to assume the agents prefer facilities that are closer to them, thus the social cost of an agent is often represented by the distance between the agent and the facility it is assigned to. Besides the cost of distances, there are also other cost functions and various constraints for different problems. For example, in the capacitated facility assignment problem, each facility has a capacity, which is the maximum number of agents it could accommodate. A valid assignment is assigning each agent to a facility, without violating the capacity constraints. The goal of the capacitated facility assignment problem is to minimize the total social cost, which is the sum of distances from each agent to the facility it is assigned to. Another example is the facility location problem, in which each facility has an opening cost. The cost function for an assignment is the sum of opening cost of the facilities that has agent(s) assigned to them, plus the sum of distances from each agent to its assigned facility. And the goal of the facility location problem is to minimize the cost function.

If the distances between agents and facilities are known, then we can calculate the optimal solution for these assignment problems. Note that many of the facility location problems are NP-Complete, but at least it is possible to compute optimum assignments of agents to facilities (or the optimum candidates to select for social choice settings) given unlimited computational resources. For many of the settings we mentioned above, however, it is unlikely that we know the exact distances from the agents to the facilities. For social choice these distances would correspond to the cardinal preferences of voters for candidates, for example, ``My cost for candidate X winning is exactly 2.35." It is far more common that only {\em ordinal} preferences of the agents for the candidates are known, i.e., ``I prefer X to Y". Similarly, when trying to form a matching, or even in general facility location problems where we survey the agents to find out their preferences, it is much easier to elicit ordinal preferences (``I prefer to be matched with X over Y") over precise numerical preferences. These observations have recently led to a large body of work using the {\em utilitarian approach}, in which we assume that some latent numerical costs or utilities exist, but we only know the {\em ordinal} preferences of the agents, not their underlying numerical costs. See for example \cite{anshelevich2015approximating, anshelevich2015randomized, boutilier2015optimal, goel2016metric, feldman2016voting, skowron2017social, cheng2018distortion} for the social choice setting, \cite{ anshelevich2015blind, anshelevich2016truthful, anshelevich2017tradeoffs, abramowitz2017utilitarians} for matching and other graph problems, and \cite{caragiannis2016truthful} for facility location. These works focus on measuring the {\em distortion} of various algorithms: a measure of how well an algorithm behaves when using only ordinal information, as compared to the optimum algorithm which has access to the true underlying numerical information. More formally, the \textit{distortion} \cite{procaccia2006distortion,anshelevich2015approximating} of an assignment is defined as the worst-case ratio of its social cost to the social cost of the optimal solution.

As in the work mentioned above, we assume that only ordinal information about the distances between agents and facilities is known. However, although the locations and numerical preferences of the agents are usually difficult to obtain, the locations of facilities are mostly public information. The locations of political candidates in ideological space can be reasonably well estimated based on their voting records and public statements. When forming a survey about new stores to open, we may not know exactly how much the customers would prefer one store over the other since the customer locations may be private, but the locations of the possible stores themselves are public knowledge. The main difference between our work and previous work in this area is that we assume:\\

{\em \noindent While only ordinal information about agent preferences in known, we know the exact locations of the possible facilities $\mathcal{F}$.}\\

As we discuss below, this extra information about the locations of the facilities relative to each other allows us to produce much stronger algorithms, and show much nicer bounds on distortion. In fact, in many cases, we do not even need the full information about the locations of the facilities. The main message of this paper is that having a small amount of information about the candidates in social choice settings, or the facilities in facility location, allows us to obtain solutions which are provably {\em close to optimal} for a large class of problems even though the only information we have about the agent preferences is ordinal, and thus it is impossible (even given unlimited computational resources) to compute the {\em true} optimum solution.

%If the distances between agents and facilities are known, then we can calculate the optimal solution to these assignment problems. Note that some of these problems are NP-Complete, and actually many studies on facility assignment problems are aiming for better approximation algorithms running in polynomial time. However, although the location of facilities are mostly public information, the agents' locations are usually private. Agents may not want to reveal their true location, or their distances to facilities. But it is reasonable to assume they would like to share their ordinal preferences over the facilities, which provides useful information without giving their actually locations. Also, it is natural to assume the agents and facilities are in a metric space, which means their distances obey the triangle inequality.

%In this paper, we assume the agents and facilities are in a metric space, and the distances between facilities are given, but we only have ordinal preferences of agents over facilities. The target is to approximate the minimum social cost using these information. To quality a facility assignment algorithm in our setting, we compare it with the omniscient algorithm with full distances information in the metric space (which gives the optimal solution). And we consider the worst-case ratio of the social cost generated by our algorithm over the cost of the optimal assignment returned by the omniscient algorithm.

\subsection{Our Contributions}
We begin by looking at the social choice setting, in which we have agents $\mathcal{A}$ and candidates $\mathcal{F}$ in a metric space, and we are given an ordinal ranking of each agent for the candidates. This setting was considered in e.g., \cite{anshelevich2015approximating, anshelevich2015randomized, goel2016metric, feldman2016voting, skowron2017social, cheng2018distortion, gross2017vote}. In particular, for the objective of minimizing the total distance from the agents to the chosen candidate, \cite{anshelevich2015approximating} showed that Copeland and similar voting mechanisms always have distortion of at most 5, while no deterministic voting mechanism can achieve a worst-case distortion of less than 3. Finding a deterministic mechanism with distortion less than 5 has been an open problem for several years \cite{goel2016metric}. In this paper, we show that if we know the exact locations of the candidates in addition to the ordinal ranking of the agents, then there is a simple algorithm which achieves a distortion of 3, and no better bound is possible. In other words, while we do not know the true distances from agents to candidates, we can compute an outcome which is a 3-approximation {\em no matter what} the true distances are, as long as they are consistent with the ordinal preferences given to us. Moreover, this approximation is possible even if for each agent we are only given their favorite (i.e., top-choice) candidate: there is no need for the agents to submit a full preference ranking over all the alternatives.

We also study other objective functions in addition to minimizing the total distance from agents to the chosen alternative. We give a natural deterministic voting mechanism which has distortion at most 3 for objectives such as minimizing the median voter cost, the egalitarian objective of minimizing maximum voter cost, and many other objectives. This mechanism achieves all these approximation guarantees {\em simultaneously}, and moreover it does not need the exact locations of the candidates: it suffices to be given an ordinal ranking of the distances from each candidate to each other candidate. In other words, this mechanism is especially suitable for the case when candidates are a subset of voters, as our mechanism will obtain the ordinal ranking of each voter for all the candidates, and this is the only information which would be required. Note that \cite{anshelevich2015approximating} proved that {\em no} deterministic mechanism can achieve a distortion of better than 5 for the median objective; the reason why we are able to achieve a distortion of 3 here is precisely because we also know how each candidate ranks all the other candidates, in addition to how each voter ranks all the candidates.

We then proceed to our general facility assignment model. We are given a set of agents and a set of facilities in a metric space. The distances between facilities are given, but the distances between agents and facilities are unknown; instead we only know ordinal preferences of agents over facilities which are consistent with the true underlying distances. There could be arbitrary constraints on the assignment, such as facility capacities, or constraints enforcing that some agents cannot be (or must be) assigned to the same facility, etc. A valid assignment is to assign each agent to a facility without violating the constraints. We consider many different social cost functions to optimize. For a general class of cost functions (essentially ones which are monotone and subadditive), we give a black-box reduction which converts an algorithm for the omniscient version of this problem (i.e., the version where the true distances are known) to an ordinal algorithm with small distortion. Specifically, if we have an omniscient algorithm which always produces an assignment which is a $\beta$-approximation to the optimum, then using it we can create an ordinal algorithm which only knows the ordinal preferences of the agents instead of their true distances to the facilities, but has distortion of at most $1+2\beta$.

%\begin{table}[htb]
%  \centering
%
%  \begin{tabular}{ | l | c | c | c | c | }
%    \hline
%     & Ord (LB) & Omniscient & Fac-Loc \\ \hline
%    Total (Sum) Social Choice               & 5(3) & 1 & 3 \\ \hline
%    Median Social Choice              & 5(5) & 1 & 3 \\ \hline
%    Egalitarian Social Choice         & 3(3) & 1 & 3 \\ \hline
%    Min Weight Bipartite Matching     & $n$(3) & 1 & 3 \\ \hline
%    Egalitarian Bipartite Matching    & ?(?) & 1 & 3 \\ \hline
%    Facility Location                 & $+\infty$ ($+\infty$) & 1.488 & 3.976 \\ \hline
%    $k$-center                          & 3 (3) & 2 & 5 \\ \hline
%    $k$-median                          & ? ($\sqrt{n}$) & 4 & 9 \\ \hline
%  \end{tabular}
%  \caption{Distortion of polynomial-time algorithms in different settings. ``Ord'' stands for the pure ordinal setting, in which we only know agent preference profiles over facilities (LB means lower bound). ``Omniscient'' stands for the setting where all the distances between agents and facilities are known. ``Fac-Loc'' stands for the setting in this paper, i.e. both the agent preference profiles and the locations of facilities are known, but not the true distances between agents and facilities. }
%    \label{table_results}
%  \end{table}

  \begin{table}[htb]
  \centering
  \begin{tabular}{ | l | c | c | c | c | }
    \hline
    & Omniscient: & Agents' ordinal prefs & Only agents' ordinal \\
    & full distances & and facility locations & prefs (lower bounds)\\ \hline
    Total (Sum) Social Choice               & 1 & 3 & 5(3)\\ \hline
    Median Social Choice              & 1 & 3 & 5(5)\\ \hline
    Min Weight Bipartite Matching     & 1 & 3 & $n$(3)\\ \hline
    Egalitarian Bipartite Matching    & 1 & 3 & -(2)\\ \hline
    Facility Location                 & 1.488 \cite{li20111} & 3.976 & $\infty$ ($\infty$)  \\ \hline
    $k$-center                          & 2 \cite{hochbaum1985best}& 5 & - (-) \\ \hline
    $k$-median                          & 2.675 \cite{byrka2014improved}& 6.35 & - ($\Omega(n)$) \\ \hline
  \end{tabular}
  \caption{Best known distortion of polynomial-time algorithms in different settings. ``Omniscient'' stands for the setting where all the distances between agents and facilities are known, and the numbers represent the best-known approximation ratios. The second column represent our setting, in which the ordinal preferences of the agents, and the numerical distances between facilities are known. The last column represents the pure ordinal setting in which only the agent ordinal preferences are known, but the distances between facilities are unknown; this setting has been previously studied, and we include the known lower bounds on the possible distortion in parentheses, including some which we prove in the Appendix.}
   \label{table_results}
  \end{table}

Many well-known problems fall into our facility assignment model; Table \ref{table_results} summarizes some of our results. For example, classic facility location with facility costs, minimum weight bipartite matching, egalitarian bipartite matching, $k$-center, and $k$-median are all special cases. In particular our results show that if we are given unbounded computational resources, then it is always possible to form an assignment with distortion of at most 3 for these problems, and no better bound is possible simply due to the fact that we do not possess all the relevant information to compute the true optimum. This is a large improvement over previously known distortion bounds: for minimum cost ordinal matching the best-known distortion bound is $n$ using random serial dictatorship (RSD) \cite{caragiannis2016truthful}; by using the knowledge of facility locations we are able to reduce this approximation ratio to 3.

\subsection{Discussion and Related Work}
Ordinal approximation \cite{anshelevich2016ordinal} for the minimum social cost (or maximum social welfare) with underlying utilities/distances between agents and alternatives has been studied in many settings including social choice \cite{procaccia2006distortion, boutilier2015optimal, anshelevich2015approximating, anshelevich2015randomized, goel2016metric, feldman2016voting, caragiannis2017subset, skowron2017social, cheng2018distortion}, matchings \cite{bhalgat2011social, filos2014social, anshelevich2015blind, anshelevich2016truthful, caragiannis2016truthful, christodoulou2016social, anshelevich2017tradeoffs}, secretary problems \cite{hoefer2017combinatorial}, participatory budgeting \cite{benade2017preference}, general graph problems \cite{anshelevich2015blind, abramowitz2017utilitarians} and many other models in recent years. The general assumption of the ordinal setting is that we only have the ordinal preferences of agents over alternatives, and the goal is to form a solution that has close to optimal social cost. There are different models: social choice, matching, facility location, etc.; different objectives: minimizing social cost, maximizing social welfare, total cost objective, median objective, egalitarian objective, etc.; different assumptions on utility or cost functions: unit-sum, unit-range, metric space, etc. In this paper, we study general facility assignment problems in a metric space, and assume that the ordinal preferences of agents over alternatives are given. Unlike previous work on this topic, we also assume the locations of the alternatives are known; we show that this extra information enables us to achieve much better approximation ratios than in the pure ordinal setting for many problems.

The distortion of social choice functions was first introduced in \cite{procaccia2006distortion}, to describe the ratio between the total utility of the optimal candidate and the candidate selected by a mechanism using only ordinal preferences.
%Caragiannis et al. \cite{caragiannis2017subset} studied the voting problem with multiple winners, and the goal is to maximize total social welfare. They assume the voters have unit-sum utility, which means each voter has a total utility of 1 over all alternatives. They gave bounds of distortion for both deterministic and randomized voting rules in this setting. Benade et al. \cite{benade2017preference} studied the maximum social welfare problem with budget on alternatives (each alternative has a cost). They consider four different kinds of ordinal inputs: knapsack votes, rankings by value, rankings by value for money and threshold approval votes, and studied the bounds of distortion under all these inputs. These researches are different from our model because they studied maximizing social welfare problems instead of minimizing social cost. They also consider different constraints of the utility function, e.g. unit-sum, while we assume the agents and alternatives are in a metric space. \cite{benade2017preference} considered different types of inputs to get more information from ordinal preferences, while we make use of the location of alternatives to get a better approximation.
\cite{anshelevich2015approximating,skowron2017social,goel2016metric} studied the distortion of social choice functions in a metric space; the assumption that the underlying numerical costs have this metric property allows for much better results than more general costs. In particular, for the objective of minimizing the total distance from the agents to the chosen candidate, the above papers were able to show good distortion bounds for many well-known mechanisms, in particular a bound of 5 for Copeland \cite{anshelevich2015approximating}, a bound of $O(\ln m)$ for Single Transferable Vote (STV) \cite{skowron2017social}, and many others. In addition, \cite{anshelevich2015approximating} proved that no deterministic mechanism can have worst-case distortion better than 3, and \cite{skowron2017social} showed that all scoring rules for $m$-candidates have a distortion of at least $1 + 2 \sqrt{\ln m - 1}$. Goel et al. \cite{goel2016metric} showed that Ranked Pairs, and the Schulze rule have a worst-case distortion of at least 5, and the expected worst-case distortion of any (weighted)-tournament rule is at least 3. They also introduced the notion of ``fairness'' of social choice rules, and discussed the fairness ratio of Copeland, Randomized Dictatorship, and a general class of cost functions. Finding a deterministic mechanism with distortion less than 5 has been an open problem for several years. In this paper, we show that if we know the exact locations of the candidates in addition to the ordinal ranking of the agents, then there is a simple algorithm which achieves a distortion of 3, and no better bound is possible.

While the above work, as well as our paper, only focuses on deterministic algorithms, the distortion of randomized algorithms in social choice has also been considered, see for example \cite{anshelevich2015randomized,fain2017sequential,gross2017vote,feldman2016voting}. In a slightly different flavor of result, \cite{cheng2017people,cheng2018distortion} consider the special case where candidates are randomly and independently drawn from the set of voters. While we leave the analysis of randomized algorithms which know the location of the facilities to future work, and consider the worst-case candidate locations, it is worth pointing out that our {\em deterministic} algorithm achieves a distortion of 3, which is also the best known distortion bound for any {\em randomized} mechanism which only knows the ordinal preferences of the agents. Similarly, another common goal is to form {\em truthful} mechanisms with small distortion for matching and social choice, as in \cite{feldman2016voting,anshelevich2016truthful, caragiannis2016truthful}; we focus on general mechanisms in this paper in order to understand the limitations of knowing only certain kinds of ordinal information, and leave the goal of forming truthful mechanisms for future work.

For the median objective of social choice problems, \cite{anshelevich2015approximating} showed that Copeland gives a distortion of at most 5, while {\em no} deterministic mechanism can achieve a distortion of better than 5 . \cite{anshelevich2015randomized} also gave a randomized algorithm that has a distortion of at most 4. In this paper, we are able to improve this bound to a tight worst-case distortion of 3 by a deterministic mechanism, because we also know how each candidate ranks all the other candidates, in addition to how each voter ranks all the candidates.

The distortion of matching in a metric space has received far less attention than social choice questions. \cite{anshelevich2015blind, anshelevich2016truthful, anshelevich2017tradeoffs} analyzed maximum-weight metric matching; the maximization objective makes this problem far easier, and even choosing a uniformly random matching yields a distortion of a small constant. This is very different from our goal of computing a minimum-cost matching, for which no ordinal approximations better than $O(n)$ are known. \cite{caragiannis2016truthful} studied facility assignment problems in a metric space; they considered the problem with or without resource augmentation, and the cases without augmentation are exactly the minimum weight bipartite matching problem. \cite{caragiannis2016truthful} showed that the approximation ratio of random serial dictatorship (RSD) is at most $n$, and gave a lower bound of $2^n - 1$ for the approximation ratio of serial dictatorship (SD), and a lower bound of $n^{0.26}$ for RSD. Their results are the best known ordinal approximations for this problem. In this paper, we are able to give a tight 3-approximation for the minimum weight matching problem, given the locations of facilities in addition to the agents' ordinal preferences.

%For other problems that will be discussed in this paper that fit in our facility assignment framework: Li \cite{li20111} gave a 1.488 approximation algorithm for the uncapacitated facility location problem using omniscient distance information. Hochbaum et. al. \cite{hochbaum1985best} proposed a 2-approximation greedy algorithm for the k-center problem in the setting that agents is a subset of facilities, which is the case of our projected problem. Charikar et. al. \cite{byrka2014improved} gave a 2.675-approximation omniscient algorithm to the k-median problem when agents is a subset of facilities. Note that these approximation results are from omniscient algorithms that runs in polynomial time, which is different from the distortion described above due to lack of distance information.

\section{Model and Notation: Social Choice}
\label{section-social-model}

For the social choice problems studied in this paper, we let $\mathcal{A} = \{ 1, 2, \dots, n\}$ be a set of agents, and let $\mathcal{F} = \{ F_1, F_2, \dots, F_m\}$ be a set of alternatives, which we will also refer to sometimes as {\em candidates} or {\em facilities}. We will typically use $i$ and $j$ to refer to agents and $W, X, Y, Z$ to refer to alternatives. Let $\mathcal{S}$ be the set of total orders on the set of alternatives $\mathcal{F}$. Every agent $i \in \mathcal{A}$ has a preference ranking $\sigma \in \mathcal{S}$; by $X \succ_i Y$ we will mean that $X$ is preferred over $Y$ in ranking $\sigma$. Although we assume that each agent has a total order of preference over the alternatives and that this order is known to us, for many of our results it is only necessary that the top choice of each agent is known. We say $X$ is $i$'s top choice if $i$ prefers $X$ to every other alternative in $\mathcal{F}$. We call the vector $\sigma = (\sigma_1, \dots, \sigma_n) \in \mathcal{S}^n$ a preference profile. We say that an alternative $X$ pairwise defeats $Y$ if $| \{i \in \mathcal{A}: X \succ_i Y \}| > \frac{n}{2}$. The goal is to choose a single winning alternative.

%\textbf{Condorcet Winner.} A weak Condorcet winner is defined as the alternative that either pairwise defeats or pairwise ties every other alternative. \\

\paragraph{Cardinal Metric Costs.} In this work we take the utilitarian view, and assume that the ordinal preferences $\sigma$ are derived from underlying (latent) cardinal agent costs. Formally, we assume that there exists an arbitrary metric $d: (\mathcal{A} \cup \mathcal{F})^2 \rightarrow \mathbb{R}_{\ge 0}$ on the set of agents and alternatives. The cost incurred by agent $i$ of alternative $X$ being selected is represented by $d(i, X)$, which is the distance between $i$ and $X$. Such spatial preferences are relatively common and well-motivated, see for example \cite{enelow1984spatial,anshelevich2015approximating} and the references therein. The underlying distances $d(i, X)$ are {\em unknown}, but unlike most previous work we {\em do} assume the distances between alternatives are given. For example, when alternatives represent facilities or stores to be opened, it makes sense that their specific locations would be known, while the distances from the customers to the stores may be private. Similarly, when the alternatives represent political candidates, it may be easy to estimate their locations in ideological space (for example based on their voting records and public statements), but the ideology of the voters is much harder to estimate, with mechanism designers only knowing which candidates the voters prefer but not how much they prefer them. The distance between two alternatives $X$ and $Y$ is denoted by $l(X, Y)$. We say that $d$ is {\em consistent} with $l$ if $\forall X,\ Y \in \mathcal{F}$, $d(X, Y) = l(X, Y)$. %The agents and facilities lie in a metric space, by which we mean that, $\forall i, i' \in \mathcal{A}$, $\forall X, Y \in \mathcal{F}, d(i, X) \le d(i, Y) + d(i', X) + d(i', Y)$, and $\forall i \in \mathcal{A}$, $\forall X, Y \in \mathcal{F}, \ d(i, X) \le d(i, Y) + d(X, Y)$.

The metric costs $d$ naturally give rise to a preference profile. We say that $d$ is {\em consistent} with $\sigma$ if $\forall i \in \mathcal{A}$, $\forall X,\ Y \in \mathcal{F}$,  if $d(i, X) < d(i, Y)$, then $X \succ_i Y$. It means that the cost of $X$ is less than the cost of $Y$ for agent $i$, so agent $i$ prefers $X$ over $Y$. As described above, we know exactly the distances $l$ and the preferences $\sigma$, but do not know the true costs $d$ which give rise to $\sigma$. Let $\mathcal{D}(\sigma, l)$ be the set of metrics that are consistent with $\sigma$ and $l$; we know that one of the metrics from this possibly infinite space captures the true costs, but do not know which one.

\paragraph{Social Cost Distortion}
We study several objective functions for social cost in this paper. First, the most common notion of social cost is the sum objective function, defined as $SC_{\sum}(X, \mathcal{A}) = \sum_{i \in \mathcal{A}} d(i, X)$. We also study the median objective function, $SC_{\text{med}}(X, \mathcal{A}) = \text{med}_{i \in \mathcal{A}} (d(i, X))$, as well as the egalitarian objective and many others (see Section \ref{subsection-social-median}).  We use the notion of distortion to quantify the quality of an alternative in the worst case, similar to the notation in \cite{boutilier2015optimal, procaccia2006distortion}. For any alternative $W$, we define the distortion of $W$ as the ratio between the social cost of $W$ and the optimal alternative:

\begin{align*}
  dist_{\sum}(W, \sigma, l) &= \sup\limits_{d \in \mathcal{D}(\sigma, l)} \frac{SC_{\sum}(W, \mathcal{A})}{\min_{X \in \mathcal{F}}SC_{\sum}(X, \mathcal{A})} \\
  dist_{\text{med}}(W, \sigma, l) &= \sup\limits_{d \in \mathcal{D}(\sigma, l)} \frac{SC_{\text{med}}(W, \mathcal{A})}{\min_{X \in \mathcal{F}}SC_{\text{med}}(X, \mathcal{A})}
\end{align*}

In other words, saying that the distortion of $W$ is at most 3 means that, no matter what the true costs $d$ are (as long as they are consistent with the $\sigma$ and $l$ which we know), it must be that the social cost of $W$ is within a factor of 3 of the true optimum alternative, which is impossible to compute without knowing the true costs. Because of this, a small distortion value means that there is no need to obtain the true agent costs, and the ordinal information $\sigma$ (together with information $l$ about the alternatives) is enough to form a good solution.

A social choice function $f$ on $\mathcal{A}$ and $\mathcal{F}$ takes $\sigma$ and $l$ as input, and returns the winning alternative. We say the distortion of $f$ is the same as the distortion of the winning alternative chosen by $f$ on $\sigma$ and $l$. In other words, the distortion of a social choice mechanism $f$ on a profile $\sigma$ and facility distances $l$ is the worst-case ratio between the social cost of $W = f(\sigma, l)$, and the social cost of the true optimal alternative.\\

%\textbf{Notaion.} We use the following notation to describe agents with particular preference rankings.
%
%\begin{align*}
%  XY &= \{ i \in \mathcal{A} : X \succ_i Y \} \\
% % XYZ &= \{ i \in \mathcal{A} : X \succ_i Y \succ_i Z \} \\
%%  X* &= \{ i \in \mathcal{A} : X \succ_i Y, \ \forall \ Y \ne X \} \\
%%  XY* &= \{ i \in \mathcal{A} : X \succ_i Y \succ_i Z, \ \forall \ Z \ne X, Y  \} \\
%%  *X &= \{ i \in \mathcal{A} : Y \succ_i X, \ \forall \ Y \ne X \} \\
%%  *XY &= \{ i \in \mathcal{A} : Z \succ_i X \succ_i Y, \ \forall \ Z \ne X, Y \}
%\end{align*}

\section{Distortion of Social Choice Mechanisms}
\label{section-social-choice}

\subsection{Distortion of Total Social Cost}
\label{subsection-social-sum}
In this section, we study the sum objective and provide a deterministic algorithm that gives a distortion of at most 3. According to \cite{anshelevich2015approximating}, the lower bound on the distortion for deterministic social choice functions with only ordinal preferences (without knowing $l$) is 3. This occurs in the simple example with 2 alternatives which are tied with approximately half preferring each one. No matter which one is chosen, the true optimum could be the other one, and its social cost can be as much as 3 times better. Because the example in Theorem 3 from \cite{anshelevich2015approximating} only has two alternatives, knowing $l$ does not provide any extra information, and thus that example also provides a lower bound of 3 in our setting, although we assume the distances $l$ between facilities are known in this paper. Therefore, our mechanism achieves the best possible distortion in this setting. Note that if we only have ordinal preferences of the agents without the distances between facilities, then the best known approach so far is Copeland, which gives a distortion at most 5. Thus our results establish that by knowing the distances $l$ between alternatives, it is possible to reduce the distortion from 5 to 3, and no better deterministic mechanism is possible.

\begin{lemma}
  \label{lemma-social-basic}
  Let $W,\ X$ be alternatives. If $W\succ_i X$, then $d(i, X) \ge \frac{d(X, W)}{2}$. [Lemma 5 in \cite{anshelevich2015approximating}]
\end{lemma}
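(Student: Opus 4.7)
The plan is to use the two facts immediately available: the ordinal consistency assumption (which relates the preference $W \succ_i X$ to distances from $i$) and the triangle inequality on the metric $d$. Since $d \in \mathcal{D}(\sigma, l)$ is consistent with $\sigma$, the preference $W \succ_i X$ must be compatible with the numerical costs, which means $d(i, W) \le d(i, X)$ (otherwise $i$ would strictly prefer $X$ by the consistency rule stated in the Cardinal Metric Costs paragraph).

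Next I would invoke the triangle inequality on the three points $X$, $i$, $W$ in the metric space:
\begin{equation*}
d(X, W) \le d(X, i) + d(i, W).
\end{equation*}
Substituting the bound $d(i, W) \le d(i, X)$ from the previous step and using symmetry $d(X,i) = d(i,X)$, the right-hand side is at most $2\,d(i, X)$. Rearranging yields $d(i, X) \ge d(X, W)/2$, which is exactly the claim.

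There is essentially no obstacle here: the statement is a direct two-line consequence of consistency plus the triangle inequality, and it is cited as Lemma 5 from \cite{anshelevich2015approximating}. The only subtle point to be careful about is handling the (possible) case of equality $d(i,W) = d(i,X)$, where the ordinal preference is not strictly forced by the metric; but under the standard convention that $d$ being consistent with $\sigma$ allows ties to be broken either way, the weak inequality $d(i, W) \le d(i, X)$ still holds whenever $W \succ_i X$, which is all the argument needs.
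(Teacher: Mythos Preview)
Your argument is correct and is exactly the standard two-line proof: consistency of $d$ with $\sigma$ gives $d(i,W)\le d(i,X)$, and the triangle inequality $d(X,W)\le d(i,X)+d(i,W)\le 2\,d(i,X)$ finishes it. The paper itself does not supply a proof of this lemma at all---it is simply cited as Lemma~5 of \cite{anshelevich2015approximating}---so there is nothing to compare against, but what you wrote is precisely the argument from that reference.
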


In the following algorithm, we generate a set of projected agents as follows: Given agents $\mathcal{A}$, alternatives $\mathcal{F}$, and the preference profile $\sigma$, for each agent $i$ denote alternative $X_i$ as $i$'s top choice. Then we create a new agent $\tilde{i}$ at the location of $X_i$ in the metric space, as shown in Figure~\ref{fig:projected} (a); consequently, $\forall \ Y \in \mathcal{F}$, $d(\tilde{i}, Y) = d(X_i, Y)$. Denote the set of the new agents as $\tilde{\mathcal{A}} = \{ \tilde{1}, \tilde{2}, \dots, \tilde{n} \}$. For any metric $d$ consistent with $l$, $d(\tilde{i}, Y) = d(X_i, Y) = l(X_i, Y)$, so the distances between agents in $\tilde{\mathcal{A}}$ and alternatives in $\mathcal{F}$ are known to us, unlike the true distances between $\mathcal{A}$ and $\mathcal{F}$. %Note that we only have preferences of agents in $\mathcal{A}$ over alternatives in $\mathcal{F}$, but the distances between them are unknown.\\

\begin{figure}[htb]
\begin{center}
\includegraphics[scale=0.7]{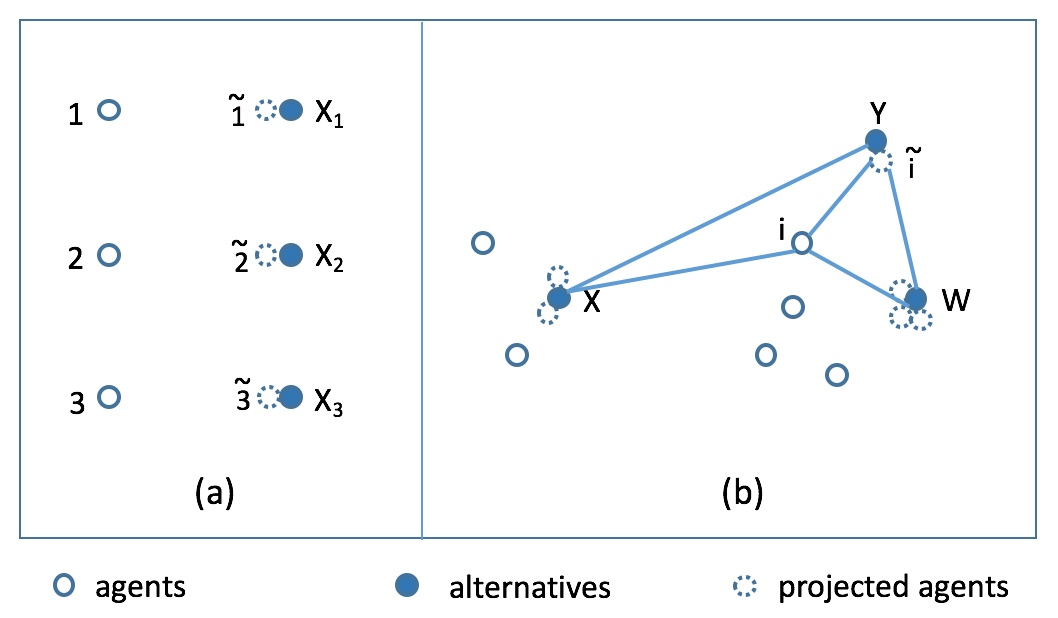}
\end{center}
\caption{(a) For each agent, generate a projected agent at the location of its top choice alternative. (b) A figure demonstrating agent $i$, $i$'s top choice alternative $Y$, $i$'s projected agent $\tilde{i}$ located at $Y$, the winner $W$, and the optimal alternative $X$ for the proof of Theorem~\ref{thm-social-sum}.}
\label{fig:projected}
\end{figure}

\begin{algorithm}[htb]
\caption{Algorithm for the minimum total social cost.}
\label{alg-social-sum}
 \SetKwInOut{Input}{Input}
 \SetKwInOut{Output}{Output}
 \Input{Agents $\mathcal{A} = \{ 1, 2, \dots, n\}$, \\
        Alternatives $\mathcal{F} = \{ F_1, F_2, \dots, F_m\}$,\\
        Each agent $i$'s top choice alternative,\\
        Distances between alternatives, i.e., $l(Y, Z)$, $\forall \ Y, Z \in \mathcal{F}$
        }
 \Output{ The winning alternative $W$. }

 Generate projected agent set $\tilde{\mathcal{A}}$. For each alternative $X \in \mathcal{F}$, calculate the total social cost on $\tilde{\mathcal{A}}$ by choosing $X$, i.e., $SC_{\sum}(X, \tilde{\mathcal{A}}) = \sum_{\tilde{i} \in \tilde{\mathcal{A}} } d(\tilde{i}, X) = \sum_{\tilde{i} \in \tilde{\mathcal{A}} } l(\tilde{i}, X)$  .

 \textbf{Final Output:} Return the alternative $W$ that has the minimum social cost $SC_{\sum}(W, \tilde{\mathcal{A}})$ .
\end{algorithm}

\begin{theorem}
\label{thm-social-sum}
The distortion of Algorithm~\ref{alg-social-sum} for minimum total social cost on $\mathcal{A}$ is at most 3.
\end{theorem}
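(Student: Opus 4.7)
The plan is to fix the winner $W$ returned by the algorithm and the true optimal alternative $X^*$, and to bound $SC_{\sum}(W,\mathcal{A})$ against $SC_{\sum}(X^*,\mathcal{A})$ via two applications of the triangle inequality, together with the definition of a top choice and the minimality of $W$ on the projected instance.

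\textbf{Step 1: Pull the real agents over to their projected copies.} For each agent $i$, let $Y_i$ be $i$'s top choice, so $\tilde{i}$ sits at $Y_i$ and $d(\tilde{i}, Z) = l(Y_i, Z)$ for any $Z \in \mathcal{F}$. By the triangle inequality,
\[
 d(i, W) \;\le\; d(i, Y_i) + d(Y_i, W) \;=\; d(i, Y_i) + d(\tilde{i}, W).
\]
Since $Y_i$ is $i$'s most preferred alternative and the metric is consistent with $\sigma$, we have $d(i, Y_i) \le d(i, X^*)$. Summing over $i$ gives
\[
 SC_{\sum}(W, \mathcal{A}) \;\le\; SC_{\sum}(X^*, \mathcal{A}) + SC_{\sum}(W, \tilde{\mathcal{A}}).
\]

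\textbf{Step 2: Use optimality of $W$ on $\tilde{\mathcal{A}}$ and push back to the real cost of $X^*$.} Because Algorithm~\ref{alg-social-sum} picks $W$ to minimize the social cost on $\tilde{\mathcal{A}}$ among all alternatives, $SC_{\sum}(W, \tilde{\mathcal{A}}) \le SC_{\sum}(X^*, \tilde{\mathcal{A}})$. Now apply the triangle inequality in the other direction for each $i$:
\[
 d(\tilde{i}, X^*) \;=\; d(Y_i, X^*) \;\le\; d(Y_i, i) + d(i, X^*) \;\le\; 2\, d(i, X^*),
\]
where the last step again uses that $Y_i$ is the top choice of $i$, so $d(i, Y_i) \le d(i, X^*)$. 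Summing yields $SC_{\sum}(X^*, \tilde{\mathcal{A}}) \le 2\, SC_{\sum}(X^*, \mathcal{A})$, and hence $SC_{\sum}(W, \tilde{\mathcal{A}}) \le 2\, SC_{\sum}(X^*, \mathcal{A})$.

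\textbf{Step 3: Combine.} Plugging Step 2 into Step 1,
\[
 SC_{\sum}(W, \mathcal{A}) \;\le\; SC_{\sum}(X^*, \mathcal{A}) + 2\, SC_{\sum}(X^*, \mathcal{A}) \;=\; 3\, SC_{\sum}(X^*, \mathcal{A}),
\]
which holds for every $d \in \mathcal{D}(\sigma, l)$, giving distortion at most $3$.

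The argument is really a clean triangle-inequality sandwich, so I do not expect any serious technical obstacle; the only thing to be careful about is consistently invoking the top-choice property (rather than Lemma~\ref{lemma-social-basic}, which is not needed here) in both directions of the two triangle-inequality applications, and making sure the minimization step in the algorithm is the right one, namely minimization over all of $\mathcal{F}$ (so that in particular $X^*$ is a valid competitor).
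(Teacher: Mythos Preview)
Your proof is correct and follows essentially the same approach as the paper's: both split $d(i,W)$ via the triangle inequality through $\tilde{i}$, use the top-choice property to bound $d(i,\tilde{i})\le d(i,X^*)$, invoke optimality of $W$ on $\tilde{\mathcal{A}}$, and then bound $d(\tilde{i},X^*)\le 2\,d(i,X^*)$. The only cosmetic difference is that the paper obtains this last inequality by citing Lemma~\ref{lemma-social-basic}, whereas you (equivalently) derive it directly from the triangle inequality plus the top-choice property.
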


\begin{proof}
  Let $W$ denote the winning alternative. $W$ has the minimum social cost on the agent set $\tilde{\mathcal{A}}$, so for any alternative $Y$, it must be that

  \begin{equation}
    \frac{SC_{\sum}(W, \tilde{\mathcal{A}})}{SC_{\sum}(Y, \tilde{\mathcal{A}})} = \frac{\sum_{\tilde{i} \in \tilde{\mathcal{A}} } d(\tilde{i}, W)}{\sum_{\tilde{i} \in \tilde{\mathcal{A}} } d(\tilde{i}, Y)} =  \frac{\sum_{i \in \mathcal{A} } d(\tilde{i}, W)}{\sum_{i \in \mathcal{A} } d(\tilde{i}, Y)} \le 1 \label{lemma-loc-eq1}
  \end{equation}

  Let $X$ denote the true optimal alternative for $\mathcal{A}$. %, by the inequality above,
%
%  \begin{align}
%    \frac{SC_{\sum}(W, \tilde{\mathcal{A}})}{SC_{\sum}(X, \tilde{\mathcal{A}} )} = \frac{\sum_{\tilde{i} \in \tilde{\mathcal{A}} } d(\tilde{i}, W)}{\sum_{\tilde{i} \in \tilde{\mathcal{A}} } d(\tilde{i}, X)} &\le 1 \nonumber \\
%    \frac{\sum_{i \in \mathcal{A}} d(\tilde{i}, W)}{\sum_{i \in \mathcal{A}} d(\tilde{i}, X)} &\le 1 \label{lemma-loc-eq1}
%  \end{align}
  We want to get $dist_{\sum}(W, \sigma, l)$ by upper bounding the cost incurred by $W$ compared to $X$:

  \begin{align}
    \frac{SC_{\sum}(W, \mathcal{A})}{SC_{\sum}(X, \mathcal{A})} &= \frac{\sum_{i \in \mathcal{A}} d(i, W)}{\sum_{i \in \mathcal{A}} d(i, X)} \nonumber \\
    &\le \frac{\sum_{i \in \mathcal{A}} (d(i, \tilde{i}) + d(\tilde{i}, W))}{\sum_{i \in \mathcal{A}} d(i, X)} \nonumber \\
    &= \frac{\sum_{i \in \mathcal{A}} d(i, \tilde{i})}{\sum_{i \in \mathcal{A}} d(i, X)} + \frac{\sum_{i \in \mathcal{A}} d(\tilde{i}, W)}{\sum_{i \in \mathcal{A}} d(i, X)} \label{lemma-loc-eq2}
  \end{align}

  The inequality $d(i, W) \le d(i, \tilde{i}) + d(\tilde{i}, W)$ is due to the triangle inequality since $d$ is a metric, as shown in Figure~\ref{fig:projected} (b). $\forall i \in \mathcal{A}$, we know that $\tilde{i}$ is located at $i$'s top choice alternative, so the distance between $i$ and $\tilde{i}$ must be less than (or equal to) the distance between $i$ and any alternative; thus $d(i, \tilde{i}) \le d(i, X)$. Summing up for all $i \in \mathcal{A}$, we get that $\frac{\sum_{i \in \mathcal{A}} d(i, \tilde{i})}{\sum_{i \in \mathcal{A}} d(i, X)} \le 1$. For any agent $i$ such that $X$ is not $i$'s top choice, suppose alternative $Y$ is $i$'s top choice, then $\tilde{i}$ has the same location as $Y$ and $d(\tilde{i}, X) = d(X, Y)$. By Lemma~\ref{lemma-social-basic}, $d(i, X) \ge \frac{d(X, Y)}{2}$, thus $d(i, X) \ge \frac{d(\tilde{i}, X)}{2}$. For all $i$ that $X$ is $i$'s top choice, $d(\tilde{i}, X) = 0$, so the inequality $d(i, X) \ge \frac{d(\tilde{i}, X)}{2}$ holds for all $i \in \mathcal{A}$. Together with inequality~\ref{lemma-loc-eq1} and ~\ref{lemma-loc-eq2},

  \begin{equation*}
    \frac{SC_{\sum}(W, \mathcal{A})}{SC_{\sum}(X, \mathcal{A})} \le 1 + \frac{\sum_{i \in \mathcal{A}} d(\tilde{i}, W)}{\sum_{i \in \mathcal{A}} \frac{d(\tilde{i}, X)}{2}}
    = 1 + 2\frac{\sum_{i \in \mathcal{A}} d(\tilde{i}, W)}{\sum_{i \in \mathcal{A}} d(\tilde{i}, X)}
    \le 3
  \end{equation*}

\end{proof}

\subsection{Distortion of Median Social Cost}
\label{subsection-social-median}
In this section, we study the median objective function, and provide a deterministic mechanism that gives a distortion of at most 3. Recall that we define the median social cost of an alternative $X$ as $SC_{\text{med}}(X, \mathcal{A}) = \text{med}_{i \in \mathcal{A}} (d(i, X))$. We will refer to this as $\text{med}(X)$ when $d$ and $\mathcal{A}$ are fixed. If $n$ is even, we define median to be the $(\frac{n}{2} + 1)^{th}$ smallest value of the distances. Note that no deterministic mechanism which only knows ordinal preferences can have worst-case distortion better than 5 (Theorem 14 in \cite{anshelevich2015approximating}). With known distances between facilities, we are able to provide a natural social choice function with distortion of 3, which is also provably the best possible distortion in our setting (consider the example in Theorem 3 from \cite{anshelevich2015approximating} again). Moreover, our social choice function only uses ordinal information about the alternatives, and not the full distances $l$; in particular as long as we have ordinal preferences of each alternative for each other alternative (and thus a total order of the distances from each alternative to the others), then our mechanism will work properly. Such ordinal information may be easier to obtain than full distances $l$; for example candidates can rank all the other candidates. In particular, given agents with ordinal preferences such that the candidates are a subset of the agents, our mechanism will always form an outcome with small distortion, even if we do not know the distances $l$.

Note that using only agents' top choices over alternatives and the distances between alternatives, as Algorithm~\ref{alg-social-sum} does for the total social cost objective, is not enough to give a worst-case distortion of 3 for the median objective. Consider the following example: there are 4 alternatives $W, X, Y, Z$, the distances between them are: $d(W, Y) = d(Y, X) = d(X, Z) = d(Z, W) = 2$ and $d(W, X) = d(Y,Z) = 4$. Suppose $W$ is agents 1, 2's top choice, $X$ is agent 3, 4's top choice, $Y$ is agent 5, 6's top choice, and $Z$ is agent 7, 8's top choice. This graph is symmetric, so we choose an arbitrary alternative as the winner. Suppose we choose $W$ as the winner, and the distances between agents and facilities are: the distances from agents 1, 2 to $W$ are both 100, the distances from agents 1, 2 to $X, Y, Z$ are all 102. The distances from agents 5, 6 to $Y, X$ are all 1, and the distances from agents 5, 6 to $W, Z$ are all 3. The distances from agents 7, 8 to $Z, X$ are all 1, and the distances from agents 7, 8 to $Y, W$ are all 3. The distances from agents 3, 4 to $X$ are both 1, the distances from 3, 4 to $Y, Z$ are all 3, and the distances from 3, 4 to $W$ are both 5. In this example, the median is the distance from $5^{th}$ closest agent to the winning alternative. $X$ is the optimal alternative with $\text{med}(W) = 1$, while $\text{med}(W) = 5$ has a distortion of 5.\\

We will use the following Lemmas from \cite{anshelevich2015approximating} in the proof of our algorithm:

\begin{lemma}
\label{lemma-social-median1}
  For any two alternatives $W$ and $Y$, we have $\text{med}(W) \le \text{med}(Y) + d(Y, W)$. [Lemma 11 in \cite{anshelevich2015approximating}]
\end{lemma}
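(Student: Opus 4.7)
The plan is to chain the triangle inequality with the translation-equivariance (and monotonicity) of order statistics. Concretely, for every agent $i \in \mathcal{A}$ the metric property gives
\[
d(i, W) \;\le\; d(i, Y) + d(Y, W),
\]
and since $d(Y,W)$ does not depend on $i$, the list $\bigl(d(i,W)\bigr)_{i \in \mathcal{A}}$ is pointwise dominated by the list $\bigl(d(i,Y) + d(Y,W)\bigr)_{i \in \mathcal{A}}$, which is just a uniform translate of $\bigl(d(i,Y)\bigr)_{i \in \mathcal{A}}$.

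Next I would make this precise at the level of the median. Let $k$ be the rank used to define $\text{med}$ (so $k = \lceil n/2 \rceil$ for $n$ odd and $k = n/2 + 1$ for $n$ even, per the paper's convention). Let $S \subseteq \mathcal{A}$ be the set of the $k$ agents achieving the $k$ smallest values of $d(\cdot, Y)$; by definition, $d(i, Y) \le \text{med}(Y)$ for every $i \in S$. Combining this with the pointwise triangle inequality above, every agent $i \in S$ satisfies $d(i, W) \le \text{med}(Y) + d(Y, W)$. Hence at least $k$ agents in $\mathcal{A}$ have their distance to $W$ bounded by $\text{med}(Y) + d(Y, W)$, so the $k$-th smallest value of $d(\cdot, W)$ is at most $\text{med}(Y) + d(Y, W)$, i.e., $\text{med}(W) \le \text{med}(Y) + d(Y, W)$.

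There is no real obstacle here: the argument is essentially the observation that the order statistic $x \mapsto x_{(k)}$ is both monotone (with respect to pointwise comparison on any fixed index set) and translation-equivariant, together with the metric triangle inequality applied uniformly across agents. The only mild point to check is that the indexing convention for the median (especially when $n$ is even) is consistent with the ``at least $k$ agents'' counting argument, which is immediate from the paper's stated definition.
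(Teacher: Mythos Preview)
Your argument is correct and is exactly the standard proof of this fact. Note, however, that the paper does not supply its own proof of this lemma: it is quoted verbatim as Lemma~11 of \cite{anshelevich2015approximating} and used as a black box, so there is no in-paper argument to compare against. Your write-up is precisely what one would expect the original proof to look like (triangle inequality pointwise, then monotonicity of the $k$-th order statistic), and it correctly handles the paper's even-$n$ convention.
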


\begin{lemma}
\label{lemma-social-median2}
  For any two alternatives $Y$ and $P$, if $P$ pairwise defeats (or pairwise ties) $Y$, then $\text{med}(Y) \ge \frac{d(Y, P)}{2}$. [Proved in Theorem 16 in \cite{anshelevich2015approximating}]
\end{lemma}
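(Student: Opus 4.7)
The plan is to convert the pairwise-defeat hypothesis into a lower bound on the number of agents whose distance to $Y$ is at least $d(Y,P)/2$, and then to invoke the definition of the median. The only tool I expect to need beyond elementary counting is Lemma~\ref{lemma-social-basic}, which is exactly tailored to this situation.

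First I would let $S = \{i \in \mathcal{A} : P \succ_i Y\}$ and translate the pairwise-defeat-or-tie hypothesis into the statement $|S| \ge n/2$: strict majority if $P$ defeats $Y$, and exactly $n/2$ in the tie case (which can only occur when $n$ is even). Second, I would apply Lemma~\ref{lemma-social-basic} to each $i \in S$ with the roles $W \leftarrow P$ and $X \leftarrow Y$. Since $P \succ_i Y$, the lemma yields $d(i,Y) \ge d(P,Y)/2 = d(Y,P)/2$, so at least $n/2$ agents have distance to $Y$ bounded below by $d(Y,P)/2$.

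Third, I would combine this count with the stated definition of median (the $(\lfloor n/2\rfloor+1)$-th smallest distance, with the convention that for even $n$ it is the $(n/2+1)$-th smallest). If we sort the distances $d(i,Y)$ in nondecreasing order as $d_1 \le d_2 \le \cdots \le d_n$, then having at least $n/2$ of these values $\ge d(Y,P)/2$ means at most $n/2$ of them are strictly smaller than $d(Y,P)/2$; hence $d_{\lfloor n/2\rfloor+1} \ge d(Y,P)/2$, which is exactly $\text{med}(Y) \ge d(Y,P)/2$.

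The main (and only) delicate point is checking the final step separately for even and odd $n$, and in particular confirming that the tie case $|S| = n/2$ (possible only for even $n$) still suffices. This is where the author's specific choice of the $(n/2+1)$-th smallest value for the even-$n$ median is essential: a weaker convention (say, averaging the two middle values) could erode the bound in the tie case. With this convention the counting argument is tight and the lemma follows directly.
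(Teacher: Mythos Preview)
Your proof is correct and is the standard argument: the paper itself does not give a proof here but simply cites Theorem~16 of \cite{anshelevich2015approximating}, and your approach---apply Lemma~\ref{lemma-social-basic} to the (at least) $n/2$ agents who prefer $P$ to $Y$, then use the paper's convention that the median is the $(\lfloor n/2\rfloor+1)$-th smallest value---is exactly the proof given there. Your careful check of the tie case with even $n$ is appropriate and confirms that the paper's choice of median convention is precisely what makes the bound go through.
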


% \begin{lemma}
% \label{lemma-social-median3}
% Let $W$ be an alternative $\in \mathcal{F}$, if $W$ pairwise defeats (or pairwise ties) every alternative $\in \mathcal{F}$, then $W$ has distortion $\le 3$ for the median objective. [Proved in Theorem 8 in \cite{anshelevich2015approximating}]
% \end{lemma}

\begin{lemma}
\label{lemma-social-median3}
Let $W, Y$ be an alternatives $\in \mathcal{F}$, if $W$ pairwise defeats (or pairwise ties) $Y$, then $\text{med}(W) \le 3\text{med}(Y)$. [Proved in Theorem 8 in \cite{anshelevich2015approximating}]
\end{lemma}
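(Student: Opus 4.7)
The plan is to derive the bound by chaining the two preceding lemmas, since Lemma~\ref{lemma-social-median1} gives an upper bound on $\text{med}(W)$ in terms of $\text{med}(Y)$ plus the distance $d(Y,W)$, while Lemma~\ref{lemma-social-median2} lets us bound $d(Y,W)$ from above in terms of $\text{med}(Y)$ precisely under the hypothesis that $W$ pairwise defeats or ties $Y$.

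Concretely, I would first apply Lemma~\ref{lemma-social-median1} with the two alternatives $W$ and $Y$ to obtain
\begin{equation*}
\text{med}(W) \;\le\; \text{med}(Y) + d(Y,W).
\end{equation*}
Then I would invoke Lemma~\ref{lemma-social-median2} with the roles $Y \to Y$ and $P \to W$: because $W$ pairwise defeats (or pairwise ties) $Y$ by hypothesis, the lemma yields
\begin{equation*}
\text{med}(Y) \;\ge\; \tfrac{1}{2}\, d(Y,W),
\qquad\text{equivalently,}\qquad
d(Y,W) \;\le\; 2\,\text{med}(Y).
\end{equation*}
Substituting this into the first inequality immediately gives $\text{med}(W) \le \text{med}(Y) + 2\,\text{med}(Y) = 3\,\text{med}(Y)$, which is the desired conclusion.

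There is not really a hard step here: the proof is a two-line combination of the previous two lemmas, and the only thing to check is that the ``pairwise defeats or ties'' hypothesis is used in exactly the right direction (so that Lemma~\ref{lemma-social-median2} applies with $W$ playing the role of the defeating alternative $P$). The substantive content has already been packaged into Lemmas~\ref{lemma-social-median1} and~\ref{lemma-social-median2}, so the only thing one needs to be careful about is the orientation of the pairwise comparison when instantiating Lemma~\ref{lemma-social-median2}.
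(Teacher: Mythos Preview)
Your proposal is correct and is exactly the standard argument: the paper does not give its own proof of this lemma (it simply cites \cite{anshelevich2015approximating}), but your two-line combination of Lemmas~\ref{lemma-social-median1} and~\ref{lemma-social-median2} is precisely the intended derivation, and in fact it is the same chain the paper itself spells out for the closely related Lemma~\ref{lemma-social-median4}.
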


The main easy insight which we use in the formation of our algorithm comes from the following lemma.

\begin{lemma}
\label{lemma-social-median4}
  For any three alternatives $W$, $Y$, and $P$, if $P$ pairwise defeats (or pairwise ties) $Y$, and $d(Y, W) \le d(Y, P)$, then $\text{med}(W) \le 3 \text{med}(Y)$.
\end{lemma}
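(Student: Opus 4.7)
The plan is to chain together Lemmas~\ref{lemma-social-median1} and \ref{lemma-social-median2} with the new hypothesis $d(Y,W) \le d(Y,P)$, which slots in between them perfectly. Notice that Lemma~\ref{lemma-social-median3} required $W$ itself to pairwise defeat $Y$, but in the current statement the pairwise dominance is only assumed for $P$; the hypothesis $d(Y,W)\le d(Y,P)$ is what lets us transfer distance information from $P$ to $W$ without needing any pairwise comparison involving $W$.

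First I would apply Lemma~\ref{lemma-social-median1} with the pair $(W,Y)$ to obtain the triangle-style bound
\[
\text{med}(W) \le \text{med}(Y) + d(Y,W).
\]
Then, using the new hypothesis $d(Y,W) \le d(Y,P)$, I would replace the last term to get
\[
\text{med}(W) \le \text{med}(Y) + d(Y,P).
\]
Finally, since $P$ pairwise defeats (or ties) $Y$, Lemma~\ref{lemma-social-median2} yields $d(Y,P) \le 2\,\text{med}(Y)$, and substituting this in gives
\[
\text{med}(W) \le \text{med}(Y) + 2\,\text{med}(Y) = 3\,\text{med}(Y),
\]
which is exactly the claim.

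There is no real obstacle here: the lemma is essentially a repackaging of the same ``distance $\le$ 2$\cdot$median'' phenomenon that powers Lemma~\ref{lemma-social-median3}, but decoupled from the requirement that $W$ be the pairwise winner. The only thing worth double-checking is that the inequality in Lemma~\ref{lemma-social-median1} is stated symmetrically enough to apply with $W$ as the ``target'' alternative and $Y$ as the ``reference,'' which is how it is phrased. The usefulness of the statement, which presumably motivates it, is that in the subsequent algorithm we will pick $W$ based on its proximity to some $P$-defeated alternative $Y$ rather than on pairwise comparisons of $W$ itself, so this lemma is what makes such a choice safe.
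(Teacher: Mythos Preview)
Your proof is correct and follows exactly the same chain of inequalities as the paper: apply Lemma~\ref{lemma-social-median1} to get $\text{med}(W)\le \text{med}(Y)+d(Y,W)$, use the hypothesis $d(Y,W)\le d(Y,P)$, and then invoke Lemma~\ref{lemma-social-median2} to bound $d(Y,P)\le 2\,\text{med}(Y)$. There is no difference in approach.
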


\begin {proof}
  By Lemma~\ref{lemma-social-median1}, $\text{med}(W) \le \text{med}(Y) + d(Y, W)$. By Lemma~\ref{lemma-social-median2}, $\text{med}(Y) \ge \frac{d(Y, P)}{2}$. And we know that $d(Y, P) \ge d(Y, W)$, thus

  \begin{align*}
      \text{med}(W) &\le \text{med}(Y) + d(Y, W)\\
                    &\le \text{med}(Y) + d(Y, P)\\
                    &\le \text{med}(Y) + 2\text{med}(Y)\\
                    &\le 3\text{med}(Y)
  \end{align*}
\end{proof}

  We use a natural Condorcet-consistent algorithm to approximate the minimum median social cost with the agents' preference rankings $\sigma$ and the ordinal preferences of every alternative over other alternatives. First, create the majority graph $G=(\mathcal{F}, E)$, i.e., a graph with alternatives as vertices and an edge $(Y, Z) \in E$ if $Y$ pairwise defeats or pairwise ties $Z$. If a Condorcet winner (i.e. an alternative which pairwise defeats all others) exists, then we return it immediately.

  Otherwise, we consider each pair of alternatives. By Lemma~\ref{lemma-social-median3}, if the edge $(W, Y) \in E$, then $\text{med}(W) \le 3\text{med}(Y)$. When considering an alternative pair $W, Y$, if $(W, Y) \not\in E$ and we know that there exists another alternative $P$ which meets the conditions of Lemma~\ref{lemma-social-median4}, then we add an edge $(W, Y)$ to $G$. It is not difficult to see that whenever the edge $(W, Y)$ is in our graph, this means that $\text{med}(W) \le 3\text{med}(Y)$. As we prove below, at the end of this process there always exists at least one alternative which has edges to all the other alternatives, and thus the distortion obtained from selecting it is at most 3, no matter which alternative is the true optimal one.

  % By doing the above process in a careful manner, we obtain an alternative $W$ choosing which results in a distortion of at most 3,

  Note that from the ordinal preferences of alternatives over each other, we can get a partial order of distances between the alternatives. Denote this partial order as $\preceq$, i.e., we say that $d(W,Y)\preceq d(W,Z)$ if we know that $W$ prefers $Y$ to $Z$ (we do not have information about strict preference). This is the information we have on hand: we only know the partial order of distances between pairs of alternatives which share an alternative in common. Note, however, that if there exists a cycle in this partial order, i.e., $d(Y_1,Y_2) \preceq d(Y_2,Y_3) \preceq d(Y_3,Y_4) \preceq \dots \preceq d(Y_k,Y_1) \preceq d(Y_1,Y_2)$, then this implies that all the distances in the cycle are actually equal, and thus we can also add the relations $d(Y_1,Y_2) \succeq d(Y_2,Y_3) \succeq d(Y_3,Y_4) \succeq \dots \succeq d(Y_k,Y_1) \succeq d(Y_1,Y_2)$. Such cycles are easy to detect (e.g., by forming a graph with a node for every alternative pair and then searching for cycles), and thus we can assume that whenever a cycle exists in our partial order, then for every pair of distances $d(W,Y)$ and $d(W,Z)$ in the cycle, we have both $d(W,Y)\preceq d(W,Z)$ and $d(W,Y)\succeq d(W,Z)$.

  %For any alternatives $W$, $Y$ and $Z$, $W$ prefer $Y$ to $Z$ means that $d(W, Y) \le d(W, Z)$. Moreover, with this partial order information, if there exist cycles such as $d(Y_1) \le d(Y_2) \le d(Y_3) \le \dots \le d(Y_i) \le d(Y_1)$, then it must be the case that all the distances in this cycle are the same. We detect these equality information and add them to the partial order as input to the algorithm below.\\

\begin{algorithm}[htb]
\caption{Algorithm for the minimum median social cost.}
\label{alg-social-median}
 \SetKwInOut{Input}{Input}
 \SetKwInOut{Output}{Output}
 \Input{Agents $\mathcal{A} = \{ 1, 2, \dots, n\}$, \\
        Alternatives $\mathcal{F} = \{ F_1, F_2, \dots, F_m\}$,\\
        The majority graph $G=(\mathcal{F}, E)$,\\
        Ordinal preferences of each alternative over other alternatives,\\
        Partial order of distances between alternatives.
        }
 \Output{ The winning alternative $W$. }

 \vspace{2mm}
 If there is a Condorcet winner $W$, \textbf{return $W$ as the winner}.

 \vspace{2mm}

 \ForAll {alternative pairs $W, Y$} {
   \If{$(W, Y) \not\in E$ or $(Y, W) \not\in E$ } {
    WLOG, suppose $(Y, W)$ exists, but $(W, Y)$ does not exist.\\
    \If{there exists an alternative $P$, such that we have $d(Y, W) \preceq d(Y, P)$ in our partial order information, and $P$ pairwise defeats (or ties) $Y$}{
      Add edge $(W, Y)$ to $E$\;
      continue\;
    }
  }
}
There must exists an alternative $W$ such that $\forall Y \in \mathcal{F} - \{ W \}$, $(W, Y) \in E$. Return $W$ as the winner.
\end{algorithm}

\begin{lemma}
\label{lemma-social-median5}
  Consider the modified majority graph $G=(\mathcal{F}, E)$ at any point during Algorithm~\ref{alg-social-median}. For any edge $(W, Y) \in E$, we have that $\text{med}(W) \le 3\text{med}(Y)$.
\end{lemma}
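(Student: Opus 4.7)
The plan is to prove the lemma by induction on the order in which edges appear in $E$, splitting into two cases corresponding to the two sources of edges: the initial majority graph, and the edges inserted by the \textbf{for} loop.

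First I would handle the base case: the edges present at the start. By construction, $(W,Y) \in E$ at initialization means exactly that $W$ pairwise defeats or pairwise ties $Y$. For such pairs, Lemma~\ref{lemma-social-median3} immediately yields $\text{med}(W) \le 3\,\text{med}(Y)$, so the claim holds.

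Next I would handle the inductive step, i.e., any edge $(W,Y)$ added during the execution of the loop. By inspection of the algorithm, such an edge is added only when there exists an alternative $P \in \mathcal{F}$ such that (i) $P$ pairwise defeats or pairwise ties $Y$, and (ii) the partial order information gives $d(Y,W) \preceq d(Y,P)$. The key step is to translate (ii) into an honest metric inequality $d(Y,W) \le d(Y,P)$. By the definition of $\preceq$ given just before the algorithm, $d(Y,W) \preceq d(Y,P)$ reflects an ordinal preference of $Y$ for $W$ over $P$, which, since $d$ is consistent with $l$ and hence with the ordinal preferences among alternatives, forces $d(Y,W) \le d(Y,P)$. (The only subtlety here is the cycle-closure step described in the paragraph preceding the algorithm; but in any cycle the alternatives' preferences force all the involved distances to be equal in the true metric, so the inequality $d(Y,W) \le d(Y,P)$ still holds in that degenerate case.) With both hypotheses of Lemma~\ref{lemma-social-median4} satisfied for the triple $(W,Y,P)$, we conclude $\text{med}(W) \le 3\,\text{med}(Y)$, completing the inductive step.

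The main obstacle, such as it is, lies not in the metric reasoning (which is essentially a direct invocation of Lemma~\ref{lemma-social-median4}) but in being careful about what the partial order $\preceq$ actually tells us. Specifically, I must make sure that $\preceq$ is faithfully derived from each alternative's ordinal ranking over the other alternatives, so that whenever the algorithm reads off $d(Y,W) \preceq d(Y,P)$, the underlying metric truly satisfies $d(Y,W) \le d(Y,P)$. Once this is pinned down, the lemma follows by a single case split and two direct applications of Lemmas~\ref{lemma-social-median3} and~\ref{lemma-social-median4}.
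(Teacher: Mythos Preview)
Your proposal is correct and follows essentially the same approach as the paper: a two-case split between edges in the initial majority graph (handled by Lemma~\ref{lemma-social-median3}) and edges added by the loop (handled by Lemma~\ref{lemma-social-median4}). You are in fact slightly more careful than the paper in justifying why the algorithm's partial-order relation $d(Y,W)\preceq d(Y,P)$ entails the genuine metric inequality $d(Y,W)\le d(Y,P)$, including the cycle-closure case; the paper simply asserts this step.
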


\begin{proof}

  By Lemma~\ref{lemma-social-median3}, for any edge $(W, Y)$ in the original majority graph, $\text{med}(W) \le 3\text{med}(Y)$.

  Now consider an edge $(W, Y)$ added to $E$ when processing the alternative pair $W, Y$. It must be the case that there exists an alternative $P$, such that $d(Y, W) \le d(Y, P)$ and $P$ pairwise defeats (or ties) $Y$. By Lemma~\ref{lemma-social-median4}, $\text{med}(W) \le 3\text{med}(Y)$.
\end{proof}

\begin{lemma}
\label{lemma-social-median6}
At the end of Algorithm~\ref{alg-social-median}, there must exist an alternative $W$ such that $\forall Y \in \mathcal{F} - \{ W \}$, $(W, Y) \in E$.
\end{lemma}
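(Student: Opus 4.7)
The plan is to prove Lemma~\ref{lemma-social-median6} by contradiction. Suppose no alternative $W$ satisfies $(W,Y)\in E$ for all $Y \in \mathcal{F}-\{W\}$ at termination. The first step is to pin down the exact condition under which an edge $(W,Y)$ remains absent. This requires two things simultaneously: (i) $Y$ strictly pairwise defeats $W$, since otherwise $(W,Y)$ would be present in the initial majority graph, and (ii) the augmentation step fails for this pair, meaning no $P\in D^*(Y)$ (the set of alternatives pairwise defeating or tying $Y$) satisfies $d(Y,W)\preceq d(Y,P)$; equivalently, $d(Y,W) > d(Y,P)$ for \emph{every} $P\in D^*(Y)$.

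Using this, I would define a blocker map $f:\mathcal{F}\to\mathcal{F}$ by letting $f(W)$ be some $Y$ that blocks $W$ (existing by the contradiction hypothesis). Since $\mathcal{F}$ is finite, iterating $f$ must eventually produce a cycle $W_1,W_2,\ldots,W_k$ with $W_{i+1}=f(W_i)$ (indices mod $k$). A length-$2$ cycle is impossible, because it would force $W_1$ and $W_2$ to strictly defeat one another, so $k\ge 3$.

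The core of the argument is an inequality extracted along the cycle. Because $W_{i+2}=f(W_{i+1})$ blocks $W_{i+1}$, property (i) applied to this blocking pair says that $W_{i+2}$ strictly pairwise defeats $W_{i+1}$, and therefore $W_{i+2}\in D^*(W_{i+1})$. Now apply property (ii) to the blocking of $W_i$ by $W_{i+1}$ with the specific choice $P=W_{i+2}$:
\[
  d(W_{i+1},W_i) \;>\; d(W_{i+1},W_{i+2}).
\]
Using the symmetry of the metric $d$, this rewrites as $d(W_i,W_{i+1}) > d(W_{i+1},W_{i+2})$, a strict comparison between consecutive edge-lengths of the cycle. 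Chaining this inequality once around the whole cycle yields
\[
  d(W_1,W_2) \;>\; d(W_2,W_3) \;>\; \cdots \;>\; d(W_k,W_1) \;>\; d(W_1,W_2),
\]
which is a contradiction. Hence the blocker map cannot be defined everywhere, so at least one $W$ has all the required outgoing edges.

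The main subtle point is the correct characterization in the first paragraph: the algorithm's test is existential over $P$, so failure to add $(W,Y)$ means the inequality $d(Y,W)\leq d(Y,P)$ fails for \emph{every} $P\in D^*(Y)$. Once this is clear, the key trick is simply to plug in $P=W_{i+2}$ (available because the next blocker in the cycle is itself a strict defeater), after which metric symmetry plus the cyclic chaining do all the work. I do not expect any genuine difficulty beyond this characterization, since the ordinal information over pairs of alternatives (together with the partial order $\preceq$ it induces) is exactly what is needed to justify comparing $d(W_{i+1},W_i)$ with $d(W_{i+1},W_{i+2})$.
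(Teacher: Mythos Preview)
Your overall strategy---contradiction, blocker map, cycle, then chain inequalities around the cycle---matches the paper's. The gap is in the step you yourself flag as ``the main subtle point.'' You write that failure of the augmentation test means ``equivalently, $d(Y,W) > d(Y,P)$ for \emph{every} $P\in D^*(Y)$.'' This equivalence is false. Algorithm~\ref{alg-social-median} does not test the actual metric inequality $d(Y,W)\le d(Y,P)$; it tests whether the relation $d(Y,W)\preceq d(Y,P)$ holds in the partial order $\preceq$ built from the alternatives' ordinal preferences (plus cycle preprocessing). Failure of that test tells you only that $Y$ ranks $P$ above $W$, which by consistency of $d$ with the ordinal preferences gives $d(Y,P)\le d(Y,W)$---not a strict inequality. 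Ties in the underlying metric can be broken either way in the ranking, so you cannot upgrade to $d(Y,W)>d(Y,P)$.

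With only non-strict inequalities, your cyclic chain collapses to
\[
d(W_1,W_2)\ \ge\ d(W_2,W_3)\ \ge\ \cdots\ \ge\ d(W_k,W_1)\ \ge\ d(W_1,W_2),
\]
i.e., all these distances are equal, which is perfectly consistent and not a contradiction. The paper closes the argument differently: the chain you derive is exactly a cycle $d(W_1,W_2)\preceq d(W_2,W_3)\preceq\cdots\preceq d(W_k,W_1)\preceq d(W_1,W_2)$ in the \emph{partial order}, and the algorithm's preprocessing step explicitly detects such cycles and adds the reverse relations. Hence $d(W_{i+1},W_i)\preceq d(W_{i+1},W_{i+2})$ \emph{is} present in $\preceq$ after preprocessing, so taking $P=W_{i+2}$ the algorithm would have added the edge $(W_i,W_{i+1})$ after all---that is the contradiction. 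Your proof needs to invoke this preprocessing step rather than claim a strict metric inequality.
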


\begin{proof}
  We prove this lemma by contradiction. Suppose no such alternative $W$ exists. Then for each alternative $Y$, there is at least one alternative $Z$, such that only $(Z, Y) \in E$ and $(Y, Z) \not\in E$. This is because we start with the majority graph, so at least one edge always exists between every pair. We create another directed graph $G' = (\mathcal{F}, E')$, with $E'$ being all the edges $(Z,Y)$ such that $Y,Z\not\in E$. Thus any pair of alternatives in $G'$ have at most one direction of edge between them. And by our assumption, each alternative $Y$ has at least one incoming edge in $G'$. Since the in-degree of each node is at least 1 in $G'$, there must be at least one cycle in $G'$. To see this, one can for example take the edge $(Y_2,Y_1)$ coming into $Y_1$, then the edge $(Y_3,Y_2)$ coming into $Y_2$, and proceed in this way until a cycle is formed. Note that every edge in $G'$ must be in the original majority graph, because if we add an edge when processing a pair of alternatives in our algorithm, that pair must have edges in both directions.

  Consider a cycle formed by edges $(Y_1, Y_2)$, $(Y_2, Y_3)$, \dots, $(Y_{k-1}, Y_k)$, $(Y_k, Y_1)$. When processing the alternative pair $Y_2, Y_3$ in Algorithm~\ref{alg-social-median}, we did not add edge $(Y_3, Y_2)$ to $E$, so it must be the case that no alternative $P$ exists such that $d(Y_2, Y_3) \preceq d(Y_2, P)$ and $P$ pairwise defeats (or ties) $Y_2$. But we know that $Y_1$ pairwise defeats (or ties) $Y_2$, because edge $(Y_1, Y_2)$ is in the original majority graph. Then the only possibility is we don't know if $d(Y_2, Y_3) \le d(Y_1, Y_2)$, i.e., either $d(Y_2, Y_3)$ and $d(Y_1, Y_2)$ are incomparable in our partial order, or we only know that $d(Y_2, Y_3) \succeq d(Y_1, Y_2)$. They cannot be incomparable, since we have the ordinal preferences of $Y_2$ for $Y_1$ and $Y_3$, thus our partial order must state that $d(Y_2, Y_3) \succeq d(Y_1, Y_2)$, i.e., $Y_2$ prefers $Y_1$ to $Y_3$.
  %By the ordinal preferences between alternatives, we have the partial order information whether $d(Y_2, Y_3) \le d(Y_1, Y_2)$ or $d(Y_1, Y_2) \le d(Y_2, Y_3)$, based on $Y_2$'s preferences for $Y_1$ and $Y_3$. So it must be the case that $Y_2$ prefer $Y_1$ to $Y_3$, and $d(Y_1, Y_2) \le d(Y_2, Y_3)$.
  By the same reasoning, we also get that $Y_3$ prefers $Y_2$ to $Y_4$, and more generally that $Y_i$ prefers $Y_{i-1}$ to $Y_{i+1}$ for all $i$, where $Y_0=Y_k$ and $Y_{k+1}=Y_1$ since it is a cycle. This means that in our partial order, we have that $d(Y_1, Y_2) \preceq d(Y_2, Y_3) \preceq \dots \preceq d(Y_{k-1}, Y_k) \preceq d(Y_k, Y_1) \preceq d(Y_1, Y_2)$. Recall, however, that this means we know $d(Y_1, Y_2) = d(Y_2, Y_3) = \dots = d(Y_{k-1}, Y_k) = d(Y_k, Y_1)$, and before running Algorithm~\ref{alg-social-median}, we detect cycles in the partial order of alternative distances, and add the equality information to the partial order. This means that whenever $d(Y_1, Y_2) \preceq d(Y_2, Y_3) \preceq \dots \preceq d(Y_{k-1}, Y_k) \preceq d(Y_k, Y_1) \preceq d(Y_1, Y_2)$ exists in our partial order, we also have $d(Y_1, Y_2) \succeq d(Y_2, Y_3) \succeq \dots \succeq d(Y_{k-1}, Y_k) \succeq d(Y_k, Y_1) \succeq d(Y_1, Y_2)$ in the partial order as well. But this gives us a contradiction, since having $d(Y_2, Y_3) \preceq d(Y_1, Y_2)$ in the partial order, combined with the fact that $Y_1$ pairwise defeats $Y_2$, would cause us to add the edge $(Y_3,Y_2)$ in our algorithm, which contradicts the statement that only the edge $(Y_2,Y_3)$ is in the final graph produced by the algorithm, but not $(Y_3,Y_2)$. Thus there must exist at least one alternative with edges from it to all the others.
\end{proof}

\begin{theorem}
\label{thm-social-median}
The distortion of Algorithm~\ref{alg-social-median} for minimum median social cost is at most 3.
\end{theorem}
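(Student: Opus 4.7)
The plan is to combine the two structural lemmas about the modified majority graph (Lemma~\ref{lemma-social-median5} and Lemma~\ref{lemma-social-median6}) directly with the definition of distortion, handling the Condorcet and non-Condorcet cases separately.

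First I would dispose of the easy case: if Algorithm~\ref{alg-social-median} returns a Condorcet winner $W$, then $W$ pairwise defeats every other alternative, including in particular the true optimal alternative $X^* = \arg\min_{X \in \mathcal{F}} \text{med}(X)$. Applying Lemma~\ref{lemma-social-median3} with $Y = X^*$ immediately yields $\text{med}(W) \le 3\, \text{med}(X^*)$, which is exactly the distortion bound we want. Since this holds for every metric $d \in \mathcal{D}(\sigma, l)$, the worst-case ratio is at most $3$.

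For the main case, suppose no Condorcet winner exists and the algorithm returns an alternative $W$ in the final step. By Lemma~\ref{lemma-social-median6}, such a $W$ exists with $(W, Y) \in E$ for every $Y \in \mathcal{F} \setminus \{W\}$ in the modified majority graph at termination. In particular, $(W, X^*) \in E$, so by Lemma~\ref{lemma-social-median5} we have $\text{med}(W) \le 3\, \text{med}(X^*)$. Taking the supremum over all $d \in \mathcal{D}(\sigma, l)$ gives $dist_{\text{med}}(W, \sigma, l) \le 3$, completing the argument.

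There is essentially no obstacle remaining once Lemmas~\ref{lemma-social-median5} and~\ref{lemma-social-median6} are in place; the theorem is a one-line corollary of them together with the Condorcet case. The only thing worth being careful about is that the bound $\text{med}(W) \le 3\,\text{med}(X^*)$ holds for \emph{every} underlying metric consistent with the ordinal information (since the relevant edges in $E$ were added using only ordinal/partial-order facts, and the triangle-inequality arguments in Lemmas~\ref{lemma-social-median1}--\ref{lemma-social-median4} hold for any such metric), so the supremum in the definition of $dist_{\text{med}}$ is correctly bounded by $3$.
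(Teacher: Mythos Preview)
Your proof is correct and follows essentially the same approach as the paper: handle the Condorcet case via Lemma~\ref{lemma-social-median3}, and otherwise invoke Lemma~\ref{lemma-social-median6} to guarantee the winner has outgoing edges to all alternatives, then apply Lemma~\ref{lemma-social-median5}. Your added remark that the edge set depends only on ordinal information (and hence the bound holds uniformly over all consistent metrics) is a nice clarification the paper leaves implicit.
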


\begin{proof}
  If there is a Condorcet winner, by Lemma~\ref{lemma-social-median3}, the distortion is at most 3.

  Otherwise, by Lemma~\ref{lemma-social-median6}, the algorithm always returns a winner. Suppose it returns alternative $W$ as the winner, by Lemma~\ref{lemma-social-median5}, $W$ has a distortion at most 3 with any alternative $X$ as the optimal solution.
\end{proof}

\subsubsection{Generalizing Median: Percentile Distortion}
Instead of just considering the median objective, we also consider a more general objective: the $\alpha$-percentile social cost. Let $\alpha\text{-PC}(Y)$ denote the value from the set $\{ d(i, Y) : i \in \mathcal{A} \}$, that $\alpha$ fraction of the values lie below $\alpha\text{-PC}(Y)$. Thus median is a special case when $\alpha=\frac{1}{2}$, $\text{med}(Y) = \frac{1}{2}\text{-PC}(Y)$. It was shown in \cite{anshelevich2015approximating} Theorem 17 that the worst-case distortion when $\alpha \in [0, \frac{1}{2}]$ in that setting (only have agent's ordinal preferences over alternatives) is unbounded, and the same example shows $\alpha \in [0, \frac{1}{2}]$ in our setting is also unbounded. However, we are able to give a distortion of 3 for $\alpha \in [\frac{1}{2}, 1]$ in this paper, while for the setting in \cite{anshelevich2015approximating}, the lower bound for distortion when $\alpha \in [\frac{1}{2}, \frac{2}{3}]$ is 5. The reason is that the ordinal preferences between alternatives are also available in our setting. We will show that Algorithm~\ref{alg-social-median} gives a distortion of at most 3 not only for the median objective, but also for the general $\alpha$-percentile objective, because all the lemmas we used to prove the conclusion for the median objective could be generalized to $\alpha$-percentile.

We use the following lemma from \cite{anshelevich2015approximating} in the proof of our algorithm:

\begin{lemma}
\label{lemma-social-median-alpha1}
  For any two alternatives $W$ and $Y$, we have $\alpha\text{-PC}(W) \le \alpha\text{-PC}(Y) + d(Y,W)$. [Lemma 18 in \cite{anshelevich2015approximating}]
\end{lemma}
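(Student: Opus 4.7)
The plan is a direct triangle-inequality argument lifting the ordering of distances from $Y$ to the ordering of distances from $W$, exactly analogous to the median version (Lemma~\ref{lemma-social-median1}) but applied at an arbitrary percentile.

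First I would unpack the definition of $\alpha\text{-PC}$ so that I can work with a concrete set of agents. Let $v = \alpha\text{-PC}(Y)$ and define $S = \{\, i \in \mathcal{A} : d(i, Y) \le v\,\}$. By the definition of the $\alpha$-percentile applied to the multiset $\{d(i,Y) : i \in \mathcal{A}\}$, at least an $\alpha$-fraction of the agents lie in $S$, i.e.\ $|S| \ge \alpha n$ (with the usual convention that $\alpha\text{-PC}(Y)$ is itself attained by some agent in $S$).

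Next I would push every such agent over to $W$ via the triangle inequality on the metric $d$. For each $i \in S$, we have
\begin{equation*}
  d(i, W) \;\le\; d(i, Y) + d(Y, W) \;\le\; v + d(Y, W) \;=\; \alpha\text{-PC}(Y) + d(Y, W).
\end{equation*}
Thus at least an $\alpha$-fraction of the values in $\{d(i, W) : i \in \mathcal{A}\}$ are bounded above by $\alpha\text{-PC}(Y) + d(Y,W)$.

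Finally I would invoke the definition of $\alpha$-PC in the other direction: if an $\alpha$-fraction of the distances from agents to $W$ do not exceed some number $U$, then the $\alpha$-percentile of those distances is at most $U$. Taking $U = \alpha\text{-PC}(Y) + d(Y,W)$ yields $\alpha\text{-PC}(W) \le \alpha\text{-PC}(Y) + d(Y,W)$, as claimed. There is no real obstacle in this proof beyond being careful about the percentile convention when $\alpha n$ is not an integer; the only subtlety is ensuring that ``at least $\alpha$ fraction of values $\le U$'' genuinely implies ``$\alpha$-PC $\le U$'' under the paper's convention (where for even $n$ the median is the $(n/2+1)^{\text{th}}$ smallest value), which holds by a direct counting argument on the sorted list of distances from agents to $W$.
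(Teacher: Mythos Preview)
Your proof is correct: the triangle-inequality transfer of the $\alpha$-fraction of agents from $Y$ to $W$ is exactly the standard argument. Note that the present paper does not actually prove this lemma at all --- it simply cites it as Lemma~18 of \cite{anshelevich2015approximating} --- so there is no ``paper's own proof'' to compare against; your argument is the natural one and matches what that reference does.
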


We can generalize Lemma~\ref{lemma-social-median4} to the following lemma, and the proof is by using Lemma~\ref{lemma-social-median-alpha1} instead of Lemma~\ref{lemma-social-median1} in the proof of Lemma~\ref{lemma-social-median4},
\begin{lemma}
  \label{lemma-social-median-alpha2}
  For any three alternatives $W$, $Y$, and $P$, if $P$ pairwise defeats (or pairwise ties) $Y$, and $d(Y, W) \le d(Y, P)$, then $\alpha\text{-PC}(W) \le 3 \alpha\text{-PC}(Y)$.
\end{lemma}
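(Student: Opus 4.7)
The plan is to mirror the three-step proof of Lemma~\ref{lemma-social-median4} almost verbatim, replacing the median-specific inputs with their $\alpha$-percentile counterparts. The target will be to chain
\[
\alpha\text{-PC}(W) \;\le\; \alpha\text{-PC}(Y) + d(Y,W) \;\le\; \alpha\text{-PC}(Y) + d(Y,P) \;\le\; 3\,\alpha\text{-PC}(Y),
\]
where the first step will come from Lemma~\ref{lemma-social-median-alpha1} (the percentile triangle-type bound the paper has just stated), the second from the hypothesis $d(Y,W)\le d(Y,P)$, and the third from a percentile lower bound $\alpha\text{-PC}(Y)\ge d(Y,P)/2$ that I still need to supply.

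For that last step I would avoid re-proving anything and instead piggyback on the median version. Because we are in the regime $\alpha\in[\tfrac{1}{2},1]$, the $\alpha$-percentile of any multiset dominates its median (the rank-$\lceil\alpha n\rceil$ entry of a sorted list is at least the rank-$\lceil n/2\rceil$ entry), so $\alpha\text{-PC}(Y)\ge \text{med}(Y)$. Then Lemma~\ref{lemma-social-median2} gives $\text{med}(Y)\ge d(Y,P)/2$ under the hypothesis that $P$ pairwise defeats or ties $Y$. Combining these yields the required $\alpha\text{-PC}(Y)\ge d(Y,P)/2$, and substituting into the chain above closes the proof.

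There is essentially no real obstacle here; the only subtlety worth flagging is that the assumption $\alpha\ge\tfrac{1}{2}$ is doing load-bearing work in the percentile-dominates-median step. This lines up with the paper's earlier remark that the $\alpha<\tfrac{1}{2}$ regime has unbounded distortion even with this extra ordinal information about the alternatives, so the restriction cannot be removed by a more clever argument.
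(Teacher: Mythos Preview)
Your proposal is correct and follows essentially the same approach as the paper: the paper's proof sketch says to rerun the proof of Lemma~\ref{lemma-social-median4} with Lemma~\ref{lemma-social-median-alpha1} in place of Lemma~\ref{lemma-social-median1}, and (as noted in the proof of Theorem~\ref{thm-social-median-alpha}) handles the Lemma~\ref{lemma-social-median2} step exactly as you do, via $\alpha\text{-PC}(Y)\ge \text{med}(Y)$ for $\alpha\ge\tfrac{1}{2}$. Your observation that the $\alpha\ge\tfrac{1}{2}$ assumption is load-bearing is also consistent with the paper.
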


\begin{theorem}
\label{thm-social-median-alpha}
The distortion of Algorithm~\ref{alg-social-median} for the $\alpha\text{-PC}$ objective social cost with $\frac{1}{2} \le \alpha \le 1$ is at most 3.
\end{theorem}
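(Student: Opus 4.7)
The plan is to follow the proof of Theorem~\ref{thm-social-median} essentially verbatim, replacing every occurrence of $\text{med}(\cdot)$ with $\alpha\text{-PC}(\cdot)$, and to verify that each lemma it relies on has an $\alpha$-percentile analog valid for $\tfrac{1}{2}\le\alpha\le 1$. The key observation that makes this almost mechanical is that Algorithm~\ref{alg-social-median} never inspects the objective function: its only inputs are the majority graph and the partial order $\preceq$ on inter-alternative distances. Hence its combinatorial guarantee (Lemma~\ref{lemma-social-median6}) transfers for free, and what remains is to supply bounds of the form $\alpha\text{-PC}(W)\le 3\,\alpha\text{-PC}(Y)$ for every edge $(W,Y)$ in the graph it builds.

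First I would dispose of the Condorcet case. If the algorithm returns a Condorcet winner $W$, then $W$ pairwise defeats (or ties) every other alternative $Y$, and setting $P:=W$ in Lemma~\ref{lemma-social-median-alpha2} makes the hypothesis $d(Y,W)\le d(Y,P)$ trivial. The conclusion $\alpha\text{-PC}(W)\le 3\,\alpha\text{-PC}(Y)$ holds in particular for $Y$ equal to the true $\alpha$-percentile optimum $X$, giving distortion at most $3$. Next I would generalize Lemma~\ref{lemma-social-median5}: I claim that every edge $(W,Y)$ in the modified majority graph satisfies $\alpha\text{-PC}(W)\le 3\,\alpha\text{-PC}(Y)$. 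Edges inherited from the original majority graph reduce to the $P=W$ specialization just used, while edges added during the algorithm come equipped, by construction, with a witness $P$ that pairwise defeats (or ties) $Y$ and for which $d(Y,W)\preceq d(Y,P)$; both hypotheses of Lemma~\ref{lemma-social-median-alpha2} are then met.

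With the generalized edge bound in hand, I would invoke Lemma~\ref{lemma-social-median6} verbatim to obtain a vertex $W$ with $(W,Y)\in E$ for every other alternative $Y$. Applying the edge bound to the pair $(W,X)$, where $X$ is the true $\alpha\text{-PC}$ optimum, yields $\alpha\text{-PC}(W)\le 3\,\alpha\text{-PC}(X)$ and hence the theorem.

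The only place where the restriction $\alpha\ge 1/2$ is actually used is inside Lemma~\ref{lemma-social-median-alpha2}, whose proof the authors sketch as ``substitute Lemma~\ref{lemma-social-median-alpha1} into the proof of Lemma~\ref{lemma-social-median4}''. That substitution silently also requires an $\alpha$-analog of Lemma~\ref{lemma-social-median2}, namely $\alpha\text{-PC}(Y)\ge d(Y,P)/2$ whenever $P$ pairwise defeats or ties $Y$; establishing this is the one real step I expect to need care. The argument is short: at least $n/2$ agents $i$ satisfy $d(i,P)\le d(i,Y)$, so by the triangle inequality $d(i,Y)\ge d(Y,P)/2$ for those agents, meaning fewer than $n/2$ of the distances $d(i,Y)$ fall strictly below $d(Y,P)/2$. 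Since $\alpha\ge 1/2$, the value $\alpha\text{-PC}(Y)$ lies at or above the median of these distances, so it too is at least $d(Y,P)/2$. Combined with Lemma~\ref{lemma-social-median-alpha1} this gives Lemma~\ref{lemma-social-median-alpha2}, after which the reduction above completes the proof.
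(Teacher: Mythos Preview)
Your proposal is correct and follows essentially the same route as the paper: note that Algorithm~\ref{alg-social-median} is objective-agnostic so Lemma~\ref{lemma-social-median6} carries over, then lift Lemmas~\ref{lemma-social-median2}--\ref{lemma-social-median5} to the $\alpha$-percentile setting and conclude as in Theorem~\ref{thm-social-median}. The only cosmetic differences are that the paper obtains the $\alpha$-analog of Lemma~\ref{lemma-social-median2} in one line via $\alpha\text{-PC}(Y)\ge\text{med}(Y)$ rather than reproving it from scratch, and treats the Condorcet case through a generalized Lemma~\ref{lemma-social-median3} instead of your (equally valid) specialization $P:=W$ of Lemma~\ref{lemma-social-median-alpha2}.
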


\begin{proof}
  Note that Lemma~\ref{lemma-social-median-alpha1} is actually a generalization of Lemma~\ref{lemma-social-median1}, and Lemma~\ref{lemma-social-median-alpha2} is a generalization of Lemma~\ref{lemma-social-median4}. Lemma~\ref{lemma-social-median2} and Lemma~\ref{lemma-social-median3} also generalize to the $\alpha\text{-PC}$ objective, because when $\frac{1}{2} \le \alpha \le 1$, for any alternative $Y$, we know $\alpha\text{-PC}(Y) \ge \text{med}(Y)$. Then Lemma~\ref{lemma-social-median5} also generalizes to the $\alpha\text{-PC}$ objective, because it only uses Lemma~\ref{lemma-social-median3} and Lemma~\ref{lemma-social-median4} in the proof. And Lemma~\ref{lemma-social-median6} still holds for the same algorithm. Thus all the lemmas and properties of the median objective used in the proof of Theorem~\ref{thm-social-median} could be generalized into the $\alpha\text{-PC}$ objective, so the conclusion still holds for the $\alpha\text{-PC}$ objective when $\frac{1}{2} \le \alpha \le 1$.
\end{proof}

\subsubsection{Algorithm~\ref{alg-social-median} and the Total Social Cost}
Although Algorithm~\ref{alg-social-median} is designed for the median objective, it also performs quite well for the sum objective. Interestingly, the distortion of this algorithm for the minimum total social cost is at most 5, which is the same as Copeland (the best known deterministic algorithm with no knowledge of candidate preferences). Thus this algorithm gives a distortion of 3 for median (and in fact for all $\alpha$-percentile objectives) and distortion of 5 for sum simultaneously. In settings where we are not sure which objectives to optimize, or ones where we care both about the total social good, and about fairness, this social choice mechanism provides the best of both worlds. The lemmas and proofs for this result are similar to Theorem~\ref{thm-social-median}, as follows.

\begin{lemma}
\label{lemma-social-medians-sum1}
Let $W, Y$ be alternatives $\in \mathcal{F}$. If $W$ pairwise defeats (or pairwise ties) $Y$, then $SC_{\sum}(W, \mathcal{A}) \le 3 SC_{\sum}(Y, \mathcal{A})$. [Proved in Theorem 7 in \cite{anshelevich2015approximating}]
\end{lemma}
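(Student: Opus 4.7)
The plan is to adapt the standard pairwise-comparison argument from the Copeland analysis, using Lemma~\ref{lemma-social-basic} and the triangle inequality, but being careful to split the agent population according to who prefers $W$ and who prefers $Y$ rather than applying the triangle inequality uniformly (the latter would only yield a bound of $5$).

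First, I would partition $\mathcal{A}$ into $S = \{i \in \mathcal{A} : W \succeq_i Y\}$ and $T = \{i \in \mathcal{A} : Y \succ_i W\}$. Since $W$ pairwise defeats or ties $Y$, we have $|S| \ge |T|$ and in particular $|S| \ge n/2 \ge |T|$. Next, I would bound $d(i,W)$ separately on each piece: for $i \in S$, since $i$ prefers $W$ to $Y$, we have the free inequality $d(i,W) \le d(i,Y)$; for $i \in T$, the best we can say is the triangle inequality $d(i,W) \le d(i,Y) + d(Y,W)$. Summing over all agents gives
\begin{equation*}
SC_{\sum}(W,\mathcal{A}) \;\le\; \sum_{i \in \mathcal{A}} d(i,Y) \;+\; |T|\cdot d(Y,W) \;=\; SC_{\sum}(Y,\mathcal{A}) \;+\; |T|\cdot d(Y,W).
\end{equation*}

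The second step is to control $|T| \cdot d(Y,W)$ by $SC_{\sum}(Y,\mathcal{A})$, which is where Lemma~\ref{lemma-social-basic} enters. For every $i \in S$ we have $W \succeq_i Y$, so by that lemma $d(i,Y) \ge d(W,Y)/2$. Summing over $S$ yields
\begin{equation*}
SC_{\sum}(Y,\mathcal{A}) \;\ge\; \sum_{i \in S} d(i,Y) \;\ge\; \tfrac{|S|}{2}\, d(W,Y),
\end{equation*}
i.e.\ $d(W,Y) \le \frac{2}{|S|}\, SC_{\sum}(Y,\mathcal{A})$. Since $|T| \le |S|$, this gives $|T|\cdot d(Y,W) \le 2\, SC_{\sum}(Y,\mathcal{A})$. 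Combining with the previous display produces $SC_{\sum}(W,\mathcal{A}) \le 3\, SC_{\sum}(Y,\mathcal{A})$, which is the required bound.

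There is no real technical obstacle: the calculation is elementary once one observes that the $|S| \ge |T|$ asymmetry must be exploited. The only subtle point is resisting the temptation to apply the triangle inequality to all $n$ agents at once, since doing so would produce a factor of $5$ rather than $3$; keeping the ``majority side'' of the population separate is what buys the improvement, and this is exactly the structural feature that makes pairwise-defeat the right primitive for bounding sum social cost ratios.
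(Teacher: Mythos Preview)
Your argument is correct and is exactly the standard proof of this fact (the one given in Theorem~7 of \cite{anshelevich2015approximating}, which the paper simply cites rather than reproving): split agents by which of $W,Y$ they prefer, use $d(i,W)\le d(i,Y)$ on the majority side and the triangle inequality on the minority side, then invoke Lemma~\ref{lemma-social-basic} on the majority side to bound $|T|\cdot d(W,Y)$ by $2\,SC_{\sum}(Y,\mathcal{A})$. There is nothing to add.
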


\begin{lemma}
\label{lemma-social-median-sum2}
  For any three alternatives $W$, $Y$, and $P$, if $P$ pairwise defeats (or pairwise ties) $Y$, and $d(Y, W) \le d(Y, P)$, then $SC_{\sum}(W, \mathcal{A}) \le 5 SC_{\sum}(Y, \mathcal{A})$.
\end{lemma}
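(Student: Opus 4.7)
The plan is to combine the triangle inequality with Lemma~\ref{lemma-social-basic} in the same spirit as the standard distortion arguments for the sum objective. First, I would apply the triangle inequality agent-by-agent:
\[
d(i, W) \le d(i, Y) + d(Y, W) \qquad \text{for every } i \in \mathcal{A},
\]
and then sum over $i$, which gives
\[
SC_{\sum}(W, \mathcal{A}) \le SC_{\sum}(Y, \mathcal{A}) + n \cdot d(Y, W).
\]
Using the hypothesis $d(Y, W) \le d(Y, P)$, this becomes $SC_{\sum}(W, \mathcal{A}) \le SC_{\sum}(Y, \mathcal{A}) + n \cdot d(Y, P)$, so it suffices to bound $n \cdot d(Y, P)$ by a constant multiple of $SC_{\sum}(Y, \mathcal{A})$.

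For the second step I would lower bound $SC_{\sum}(Y, \mathcal{A})$ using the fact that $P$ pairwise defeats or ties $Y$. Let $S \subseteq \mathcal{A}$ be the set of agents with $P \succeq_i Y$; since $P$ pairwise defeats (or ties) $Y$, we have $|S| \ge n/2$. By Lemma~\ref{lemma-social-basic}, every agent $i \in S$ satisfies $d(i, Y) \ge d(Y, P)/2$. Therefore
\[
SC_{\sum}(Y, \mathcal{A}) \ge \sum_{i \in S} d(i, Y) \ge \frac{n}{2} \cdot \frac{d(Y, P)}{2} = \frac{n \cdot d(Y, P)}{4},
\]
so $n \cdot d(Y, P) \le 4 \, SC_{\sum}(Y, \mathcal{A})$.

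Combining the two bounds gives
\[
SC_{\sum}(W, \mathcal{A}) \le SC_{\sum}(Y, \mathcal{A}) + 4\, SC_{\sum}(Y, \mathcal{A}) = 5\, SC_{\sum}(Y, \mathcal{A}),
\]
as required. I do not expect any serious obstacle here; the only subtlety is the tie-breaking in ``pairwise defeats or ties,'' which is handled by taking $S$ to be the set of $i$ with $P \succeq_i Y$ (so $|S| \ge n/2$ in either case), and then invoking Lemma~\ref{lemma-social-basic} which requires only that the agents weakly prefer $P$ to $Y$. This mirrors precisely the structure of Lemma~\ref{lemma-social-median4} but replaces the median lower bound $\mathrm{med}(Y) \ge d(Y,P)/2$ with its sum analogue $SC_{\sum}(Y,\mathcal{A}) \ge n\cdot d(Y,P)/4$, which is exactly why the constant degrades from $3$ to $5$.
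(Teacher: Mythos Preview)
Your proof is correct and follows essentially the same approach as the paper: triangle inequality summed over agents to get $SC_{\sum}(W,\mathcal{A}) \le SC_{\sum}(Y,\mathcal{A}) + n\cdot d(Y,W)$, then the hypothesis $d(Y,W)\le d(Y,P)$, then the lower bound $SC_{\sum}(Y,\mathcal{A}) \ge \frac{n}{4}d(Y,P)$ obtained from Lemma~\ref{lemma-social-basic} applied to the at-least-$n/2$ agents preferring $P$ to $Y$. The only cosmetic difference is the order in which you apply the hypothesis versus the lower bound, and your remark about weak preference is unnecessary since the paper works with strict total orders.
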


\begin {proof}
  For all $i \in \mathcal{A}$, we know $d(i, W) \le d(i, Y) + d(Y, W)$ by the triangle inequality. Summing up for all $i \in \mathcal{A}$, we get $SC_{\sum}(W, \mathcal{A}) \le SC_{\sum}(Y, \mathcal{A}) + n \cdot d(Y, W)$.

  $P$ pairwise defeats (or pairwise ties) $Y$, so at least half of the agents prefer $P$ to $Y$; thus the total social cost of $Y$ is at least the sum of the social cost of these half of agents. By Lemma~\ref{lemma-social-basic}, we get $SC_{\sum}(Y, \mathcal{A}) \ge \frac{n}{2} \frac{d(Y, P)}{2} = \frac{n}{4} d(Y, P)$. Thus,

  \begin{align*}
      SC_{\sum}(W, \mathcal{A}) &\le SC_{\sum}(Y, \mathcal{A}) + n \cdot d(Y, W)\\
                                &\le SC_{\sum}(Y, \mathcal{A}) + n \cdot d(Y, P)\\
                                &\le SC_{\sum}(Y, \mathcal{A}) + 4 SC_{\sum}(Y, \mathcal{A})\\
                                &\le 5 SC_{\sum}(Y, \mathcal{A})
  \end{align*}
\end{proof}

\begin{lemma}
\label{lemma-social-median-sum3}
  Consider the modified majority graph $G=(\mathcal{F}, E)$ at any point during Algorithm~\ref{alg-social-median}. For any edge $(W, Y) \in E$, we have that $SC_{\sum}(W, \mathcal{A}) \le 5 SC_{\sum}(Y, \mathcal{A})$.
\end{lemma}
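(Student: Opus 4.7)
The plan is to mirror the structure of the proof of Lemma~\ref{lemma-social-median5}, but replace each median-objective bound with its sum-objective analogue. The statement gives us an edge $(W,Y)$ in the modified majority graph $G=(\mathcal{F},E)$, and we need to show $SC_{\sum}(W,\mathcal{A}) \le 5 \, SC_{\sum}(Y,\mathcal{A})$ regardless of whether the edge was present originally or inserted by Algorithm~\ref{alg-social-median}. So the natural approach is a case split on the origin of the edge.

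First I would handle the case that $(W,Y)$ is in the original majority graph. In that case $W$ pairwise defeats or pairwise ties $Y$, so Lemma~\ref{lemma-social-medians-sum1} gives $SC_{\sum}(W,\mathcal{A}) \le 3\, SC_{\sum}(Y,\mathcal{A}) \le 5\, SC_{\sum}(Y,\mathcal{A})$, which is the desired bound with room to spare.

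Next I would handle the case that $(W,Y)$ was inserted by the algorithm during the processing of the unordered pair $\{W,Y\}$. By the insertion rule in Algorithm~\ref{alg-social-median}, this insertion happens exactly when there exists a third alternative $P$ with $d(Y,W) \preceq d(Y,P)$ in our partial order and with $P$ pairwise defeating (or tying) $Y$. Hence $d(Y,W) \le d(Y,P)$ in the true metric, and Lemma~\ref{lemma-social-median-sum2} applies directly, yielding $SC_{\sum}(W,\mathcal{A}) \le 5\, SC_{\sum}(Y,\mathcal{A})$.

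Since both cases give the same bound, the proof is complete. There is no genuine obstacle here: the only small point worth noting is that the constant 5 is loose in the first case (we actually get 3) but tight in the second, which is why 5 is the final stated ratio. This mirrors exactly the logic used in Lemma~\ref{lemma-social-median5}, with Lemmas~\ref{lemma-social-medians-sum1} and \ref{lemma-social-median-sum2} substituted for Lemmas~\ref{lemma-social-median3} and \ref{lemma-social-median4} respectively.
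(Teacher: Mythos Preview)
Your proposal is correct and follows essentially the same approach as the paper: a case split on whether $(W,Y)$ is an original majority-graph edge (handled by Lemma~\ref{lemma-social-medians-sum1}, giving a factor of $3\le 5$) or an edge inserted by the algorithm (handled by Lemma~\ref{lemma-social-median-sum2}, giving a factor of $5$). This is exactly the structure of the paper's proof, including the observation that the first case is loose.
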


\begin{proof}

  By Lemma~\ref{lemma-social-medians-sum1}, for any edge $(W, Y)$ in the original majority graph, $SC_{\sum}(W, \mathcal{A}) \le 3 SC_{\sum}(W, \mathcal{A})$.

  Now consider an edge $(W, Y)$ added to $E$ when processing the alternative pair $W, Y$. It must be the case that there exists an alternative $P$, such that $d(Y, W) \le d(Y, P)$ and $P$ pairwise defeats (or ties) $Y$. By Lemma~\ref{lemma-social-median-sum2}, $SC_{\sum}(W, \mathcal{A}) \le 5 SC_{\sum}(Y, \mathcal{A})$.
\end{proof}

\begin{theorem}
\label{thm-social-median-sum}
The distortion of Algorithm~\ref{alg-social-median} for minimum total social cost is at most 5, and this bound is tight.
\end{theorem}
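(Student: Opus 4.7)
The plan is to split into the upper bound (which follows immediately from the lemmas already established) and the matching lower bound (which requires an explicit construction).

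For the upper bound, let $W$ be the alternative returned by Algorithm~\ref{alg-social-median} and $X$ the true optimal alternative for the sum objective. If a Condorcet winner exists, the algorithm returns it, so $W$ pairwise defeats (or ties) $X$, and Lemma~\ref{lemma-social-medians-sum1} yields $SC_{\sum}(W,\mathcal{A}) \le 3\,SC_{\sum}(X,\mathcal{A}) \le 5\,SC_{\sum}(X,\mathcal{A})$. Otherwise, by Lemma~\ref{lemma-social-median6}, the returned $W$ has outgoing edges in the modified graph to every other alternative, and in particular $(W,X)\in E$. Lemma~\ref{lemma-social-median-sum3} then gives $SC_{\sum}(W,\mathcal{A}) \le 5\,SC_{\sum}(X,\mathcal{A})$ directly, since it handles both edges inherited from the original majority graph (via Lemma~\ref{lemma-social-medians-sum1}) and edges added during processing (via Lemma~\ref{lemma-social-median-sum2}).

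For tightness, the goal is to construct an instance with no Condorcet winner (the Condorcet case already gives the sharper bound of $3$) that saturates Lemma~\ref{lemma-social-median-sum2}. I would place alternatives on a line with $W$, $X$, $P$ at positions chosen so that $d(X,W) = d(X,P)$ and arrange agents on the $X$-to-$P$ side of $X$ so that $d(i,W) = d(i,X) + d(X,W)$ holds with equality (making the triangle-inequality step tight), with about half of the agents sitting at distance $d(X,P)/2$ from $X$ so that the inequality $SC_{\sum}(X,\mathcal{A}) \ge \tfrac{n}{4}\,d(X,P)$ from Lemma~\ref{lemma-social-basic} is also tight. This produces the desired ratio of roughly $5$ between $SC_{\sum}(W)$ and $SC_{\sum}(X)$. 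The remaining work is to add auxiliary alternatives and to split the agent groups (with small perturbations) so that (i) no Condorcet winner exists, (ii) the pair $(W,X)$ acquires an added edge in our algorithm because $P$ defeats $X$ with $d(X,W)\preceq d(X,P)$, while (iii) the reverse edge $(X,W)$ is not added, and (iv) $W$ ends up as the unique alternative with outgoing edges to all others.

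The routine part is the chain of inequalities for the upper bound. The main obstacle is the tightness construction: we must prevent $X$, which is the genuine optimum, from itself becoming the algorithm's winner, and ensure that the partial order on inter-facility distances blocks the symmetric augmentation that would give $(X,W)\in E$. This is delicate because the condition that triggers an added edge in the algorithm is symmetric in form, so the construction must carefully break symmetry by placing additional alternatives that defeat $X$ but none that analogously defeat $W$ closer than $d(W,X)$.
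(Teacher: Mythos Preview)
Your upper-bound argument is exactly the paper's: Condorcet case via Lemma~\ref{lemma-social-medians-sum1}, otherwise Lemma~\ref{lemma-social-median6} plus Lemma~\ref{lemma-social-median-sum3}.

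The tightness sketch, however, has a structural problem. If $W,X,P$ are collinear with $X$ the midpoint and all agents lie on the $X$--$P$ side of $X$ (which you need so that $d(i,W)=d(i,X)+d(X,W)$ holds with equality), then every agent is strictly closer to $P$ than to $W$. Hence $P$ pairwise defeats $W$, and $W$ has no majority-graph edge to $P$. Could the algorithm add the edge $(W,P)$? That would require some $Q$ defeating $P$ with $d(P,W)\preceq d(P,Q)$; but $d(P,W)=2\,d(X,P)$ is the largest inter-facility distance in your picture, so no such $Q$ exists among $\{W,X,P\}$. You are then forced, as you anticipate, to invent auxiliary alternatives far from $P$ that beat $P$ --- and once you do that, you must also check they do not themselves become the winner or create new added edges elsewhere. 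This is doable but messy.

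The paper avoids all of this with three alternatives and \emph{no} auxiliaries by abandoning collinearity. Take $W,Y,P$ at pairwise distances $d(Y,W)=2-2\varepsilon$, $d(W,P)=2-\varepsilon$, $d(P,Y)=2$ (a nearly equilateral triangle), and three agent blocks: $q$ agents at $Y$ with ranking $Y\succ W\succ P$; $q$ agents equidistant ($=1$) from $Y$ and $P$ but at distance $3-2\varepsilon$ from $W$, ranked $P\succ Y\succ W$; and one agent at distance $1$ from everything, ranked $W\succ P\succ Y$. This gives the Condorcet $3$-cycle $Y\to W\to P\to Y$. The tiny asymmetry in the distances ensures that among the three ``missing'' reverse edges only $(W,Y)$ gets added (since $P$ beats $Y$ and $d(Y,W)\le d(Y,P)$, whereas the analogous checks for $(Y,P)$ and $(P,W)$ fail by $\varepsilon$). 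Thus $W$ is the unique winner, and $SC_{\sum}(W)/SC_{\sum}(Y)=\bigl(q(5-4\varepsilon)+1\bigr)/(q+1)\to 5$. The key idea you were missing is that a near-equilateral configuration lets $W$ legitimately beat $P$ in the majority graph (via the lone tiebreaker agent) while still keeping $W$ far from the bulk of the agents.
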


\begin{proof}
  If there is a Condorcet winner, by Lemma~\ref{lemma-social-medians-sum1}, the distortion is at most 3.
  Otherwise, suppose the algorithm returns alternative $W$ as the winner; by Lemma~\ref{lemma-social-median-sum3} $W$ has a distortion at most 5 with any alternative $X$ as the optimal solution.

  To see that this bound is tight, consider the following example. There are three facilities $W$, $Y$, and $P$. There are $q$ agents who prefer $Y$ to $W$ to $P$, $q$ agents who prefer $P$ to $Y$ to $W$, and 1 agent who prefers $W$ to $P$ to $Y$. We denote these three sets of agents as $\mathcal{A}_Y$, $\mathcal{A}_P$ and $\mathcal{A}_W$ separately. By the preferences of agents, we know that $Y$ pairwise defeats $W$, $W$ pairwise defeats $P$, and $P$ pairwise defeats $Y$. The distances between facilities are: $d(Y, W) = 2 - 2\epsilon$, $d(W, P) = 2 - \epsilon$, $d(P, Y) = 2$, where $\epsilon$ is a very small positive number. $\mathcal{A}_Y$ is located at the same location as $Y$, so $d(\mathcal{A}_Y, Y) = 0$, $d(\mathcal{A}_Y, P)$ = 2, and $d(\mathcal{A}_Y, W) = 2 - 2\epsilon$. The distances between $\mathcal{A}_P$ and the alternatives are: $d(\mathcal{A}_P, Y) = d(\mathcal{A}_P, P) = 1$, $d(\mathcal{A}_P, W) = 3 - 2\epsilon$. $\mathcal{A}_W$ has a distance of 1 to all alternatives. Run Algorithm~\ref{alg-social-median} on this example, and consider the alternative pair $W$, $Y$. Because $P$ pairwise defeats $Y$ and $d(Y, W) \preceq d(Y, P)$, we add edge $(W, Y)$ to the graph and make $W$ the winner. The total social cost of $W$ is $q * (2-2\epsilon) + q * (3 - 2\epsilon) + 1 = q(5-4\epsilon) + 1$. While the optimal solution is to choose $Y$ as the winner, and get a total social cost of $q + 1$. When $q$ is very large and $\epsilon$ is very small, the distortion in this example approaches 5.
\end{proof}

\section{Model and Notation: Facility Assignment Problems}
\label{section-general-model}
The mechanism we used for approximation of total social cost in Theorem~\ref{thm-social-sum} can be applied to far more general problems. In this section, we describe a set of facility assignment problems that fit in this framework. As before, let $\mathcal{A} = \{ 1, 2, \dots, n\}$ be a set of agents, and $\mathcal{F} = \{ F_1, F_2, \dots, F_m\}$ be a set of facilities, with each agent $i$ having a preference ranking $\sigma_i$ over the facilities, and $\sigma = (\sigma_1, \dots, \sigma_n)$.

%Let $\mathcal{A} = \{ 1, 2, \dots, n\}$ be a set of agents, and $\mathcal{F} = \{ F_1, F_2, \dots, F_m\}$ be a set of facilities. Let $\mathcal{S}$ be the total orders on the set of alternatives $\mathcal{F}$. Every agent $i \in \mathcal{A}$ has a preference ranking $\sigma \in \mathcal{S}$; by $F_1 \succ_i F_2$ we will mean that $F_1$ is preferred over $F_2$ in ranking $\sigma$. We call the vector $\sigma = (\sigma_1, \dots, \sigma_n) \in \mathcal{S}^n$ a preference profile.

As in the social choice model, we assume that there exists an arbitrary unknown metric $d: (\mathcal{A} \cup \mathcal{F})^2 \rightarrow \mathbb{R}_{\ge 0}$ on the set of agents and facilities. The distances $d(i, F_j)$ between agents and facilities are unknown, but the ordinal preferences $\sigma$ and the distances $l$ between facilities are given. Let $\mathcal{D}(\sigma, l)$ be the set of metrics consistent with $\sigma$ and $l$, as defined previously in Section \ref{section-social-model}.

%Similar to the social choice model, we assume that there exists an arbitrary metric $d: (\mathcal{A} \cup \mathcal{F})^2 \rightarrow \mathbb{R}_{\ge 0}$ on the set of agents and facilities. The distances $d(i, F_j)$ between agents and facilities are unknown, but the distances between facilities are given. For any two facilities $F_1$ and $F_2$ in $\mathcal{F}$, let $l(F_1, F_2)$ denote the distance between them. We say $d$ is consistent with $l$ if $\forall F_1,\ F_2 \in \mathcal{F}$, $d(F_1, F_2) = l(F_1, F_2)$. The agents and facilities lie in a metric space, by which we mean that, $\forall i, i' \in \mathcal{A}$, $\forall F_1, F_2 \in \mathcal{F}, d(i, F_1) \le d(i, F_2) + d(i', F_1) + d(i', F_2)$, and $\forall i \in \mathcal{A}$, $\forall F_1, F_2 \in \mathcal{F}, \ d(i, F_1) \le d(i, F_2) + d(F_1, F_2)$.

%The metric costs $d$ naturally give rise to a preference profile. We say that $d$ is consistent with $\sigma$ if $\forall i \in \mathcal{A}$, $\forall F_1,\ F_2 \in \mathcal{F}$,  if $d(i, F_1) < d(i, F_2)$, then $F_1 \succ_i F_2$. It means that the distance to $F_1$ is less than the distance to $F_2$ for agent $i$, so agent $i$ prefer $F_1$ over $F_2$. For any $i$ in $\mathcal{A}$, we say a facility $F'$ is $i's$ top choice if $\forall F \in \mathcal{F}$, $d(i, F) \ge d(i, F')$. Let $\mathcal{D}(\sigma, l)$ be the set of metrics that consistent with $\sigma$ and $l$.

Unlike for social choice, our goal is now to choose which facilities to open, and which agents should be assigned to which facilities. Formally, we must choose an assignment $x:\mathcal{A}\rightarrow \mathcal{F}$, where $x(i)$ is the facility that $i$ is assigned to. Every $i \in \mathcal{A}$ must be assigned to one (and only one) facility in $\mathcal{F}$; other than that, there could be arbitrary constraints on the assignment. Here are some examples of constraints which fall into our framework: each facility $F_i$ has a capacity $c_i$, which is the maximum number of agents that can be assigned to $F_i$; at least (or at most) $p$ facilities should have agents assigned to them; agents $i$ and $j$ must be (or must not be) assigned to the same facility, etc. The social choice model is a special case of this one with the constraint that exactly one facility must be opened, and all agents must be assigned to it. Note that the constraints are only on the assignment, and independent of the metric space $d$. An assignment $x$ is valid if it satisfies all constraints.
%For any valid assignment $x$, suppose $i \in \mathcal{A}$ is assigned to $F_i \in \mathcal{F}$, denote $x$ as $x=\{(i, F_i): i \in \mathcal{A} \}$.
Let $\mathcal{X}$ be the set of all valid assignments.

\paragraph{The cost function of assignments.} The cost of an assignment $x$ consists of two parts. The first part is the distance cost between agents and facilities. $\forall i \in \mathcal{A}$, let $s_i$ denote the distance between $i$ and the facility it is assigned to, i.e., $s_i=d(i,x(i))$.
%suppose $i$ is assigned to $F_i \in \mathcal{F}$ in $x$,  that is, $s_i = d(i, F_i)$.
For a given metric $d$ and assignment $x$, let $s(x, d)$ denote the vector of distances between each $i \in \mathcal{A}$ and $x(i)$, i.e., $s(x, d) = (s_1, s_2, \dots, s_n)$. Let $c_d: \mathbb{R}_{\ge 0}^n \rightarrow \mathbb{R}_{\ge 0}$ be a cost function that takes a vector of distances as input. For example, this could simply sum up all the distances, take the maximum distance for an egalitarian objective, etc. To be as general as possible, instead of fixing a specific function $c_d$ we consider the set of distance cost functions that are monotone nondecreasing and subadditive. Formally, $c_d$ is monotonically nondecreasing means that for any vectors $s$ and $s'$ such that $s\leq s'$ componentwise, we have that $c_d(s) \le c_d(s')$. Any reasonable cost function should satisfy this property if agents desire to be assigned to closer facilities. $c_d$ being subadditive means that for any vectors $s$ and $s'$, we have that $c_d(s+s') \le c_d(s) + c_d(s')$. While not all functions are subadditive, many important ones are, as they represent the concept of ``economies of scale", a common property of realistic costs. %In the maximizing social welfare problems where $m$ items are assigned to $n$ players, it is common to assume that the utility functions of players are subadditive \cite{feige2009maximizing, bhawalkar2011welfare}.

%$(s_1, s_2, \dots, s_n)$ and $(s_1', s_2', \dots, s_n')$ such that $\forall i \in [1, n]$, $s_i \le s_i'$, we have that $c_d(s_1, s_2, \dots, s_n) \le c_d(s_1', s_2', \dots, s_n')$. Any reasonable cost function should satisfy this property if agents desire to be assigned to closer facilities. $c_d$ is subadditive means that for any $(s_1, s_2, \dots, s_n)$ and $(s_1', s_2', \dots, s_n')$, $c_d(s_1 + s_1', s_2 + s_2', \dots, s_n + s_n') \le c_d(s_1, s_2, \dots, s_n) + c_d(s_1', s_2', \dots, s_n')$.

The second part of the assignment cost is the facility cost. Let $c_f(x)$ denote the facility cost for assignment $x$. $c_f$ can be an {\em arbitrary} function over the assignments, for example, the opening cost of facilities, the penalty (or reward) for assigning certain agents to the same facility, etc. Our framework includes all such functions, and thus is quite general, as we discuss below. The main components needed for our framework to work is that the function $c_f$ does not depend on the distances, only on $x$, and that the function $c_d$ is subadditive.

The total cost $c(x, d)$ of an assignment $x$ is the sum of the distance cost and the facility cost, i.e. $c(x, d) = c_d(s(x, d)) + c_f(x)$. We study algorithms to approximate the minimum cost assignment given only agents' ordinal preferences over facilities, and the distances between facilities, as described above.

\paragraph{Social Cost Distortion}
As for social choice, we use the notion of distortion to measure the quality of an assignment in the worst case, similar to the notation in \cite{boutilier2015optimal, procaccia2006distortion}. For any assignment $x$, we define the distortion of $x$ as the ratio between the social cost of $x$ and the optimal assignment:

\begin{equation*}
 dist(x, \sigma, l) = \sup\limits_{d \in \mathcal{D}(\sigma, l)} \frac{c(x, d)}{\min_{x' \in \mathcal{X}}c(x', d)}
\end{equation*}

A social choice function $f$ on $\mathcal{A}$ and $\mathcal{F}$ takes $\sigma$ and $l$ as input, and returns a valid assignment on $\mathcal{A}$ and $\mathcal{F}$. We say the distortion of $f$ on $\sigma$ and $l$ is the same as the distortion of the assignment returned by $f$. In other words, the distortion of an assignment function $f$ on a profile $\sigma$ and facility distances $l$ is the worst-case ratio between the social cost of $x = f(\sigma, l)$, and the social cost of the true optimal assignment, to obtain which we would need the true distances $d$.

\paragraph{Approximation ratio of omniscient algorithms}
Consider omniscient algorithms which know the true numerical distances between agents and facilities for the facility assignment problems, in other words, the metric $d$. In some sense, the goal of our work is to determine when algorithms with only limited information can compete with such omniscient algorithms. With the full distances information, we can of course obtain the optimal assignment using brute force, while for our algorithms with limited knowledge this is impossible even given unlimited computational resources. Nevertheless, we are also interested in what is possible to achieve if we restrict ourselves to polynomial time. To differentiate traditional approximation algorithms from algorithms with small distortion, suppose that an omniscient approximation algorithm $\tilde{f}$ returns assignment $x$. Then we denote the approximation ratio of a valid assignment $x$ as:

%But for some problems, there only exist approximation algorithms in polynomial time. Suppose an polynomial time approximation algorithm $\tilde{f}$ returns assignment $x'$, we define the approximation ratio of a valid assignment $x'$ as:

\begin{equation*}
  ratio(x) = \frac{c(x, d)}{\min_{x' \in \mathcal{X}}c(x', d)}
\end{equation*}

Thus we say the approximation ratio of an omniscient algorithm $\tilde{f}$ is $\beta$ if for any input of the problem, the assignment $x$ returned by $\tilde{f}$ has $ratio(x) \le \beta$.

\subsection{Examples of Facility Assignment Problems}
\label{subsection-examples}
In this section we illustrate that our framework is quite general by giving various important examples which fit into our framework. In the section which follows, we prove a general black-box reduction theorem for our framework, and thus immediately obtain mechanisms with small distortion for all these examples simultaneously.

The total social cost problem we discussed in Section~\ref{subsection-social-sum} is a special case of the facility assignment problem such that the constraint is only one facility (alternative) is chosen, and all agents are assigned to it. %Given a set of agents $\mathcal{A}$ and a set of alternatives $\mathcal{F}$ lies in a metric space $d$.
For any assignment $x$, the facility cost function $c_f(x) = 0$, and the distance cost function $c_d(s(x, d))$ is the sum of distances from the winning alternative to all agents in the metric $d$. $c_d$ is monotone and additive (thus subadditive). Here are some other examples that fit in our framework:\\

\textbf{Minimum weight metric bipartite matching.}
Given a set of agents $\mathcal{A}$ and a set of facilities $\mathcal{F}$ such that $|\mathcal{A}| = |\mathcal{F}| = n$. $G = (\mathcal{A}, \mathcal{F}, E)$ is an undirected complete bipartite graph. The facilities and agents lie in a metric space $d$. The weight of each edge $(i, F) \in E$ is the distance between $i$ and $F$, $w(i, F) = d(i, F)$. The goal is to find a minimum weight perfect matching of the bipartite graph given only ordinal information. This setting has been studied before, and the best distortion bound known is $n$ \cite{caragiannis2016truthful} given by RSD for the case when only the ordinal preferences $\sigma$ are known. Our results show that if we also know the distances $l$ between facilities, then even without knowing the distances $d$ between agents and facilities, it is possible to create simple mechanisms with distortion at most 3 (we can show that no better bound is possible for this setting). Thus having a bit more information about the facilities immediately improves the distortion bound by a very large amount. We show this result by using our facility assignment framework above: the constraint here is that each facility has a capacity of 1, thus a valid assignment is a perfect matching of the bipartite graph. For any assignment $x$, the facility cost function is $c_f(x) = 0$, and the distance cost function $c_d(s(x, d))$ is the total edge weight in the assignment. $c_d$ is monotone and additive (thus subadditive).\\

\textbf{Egalitarian bipartite matching.}
With the same bipartite graph as in minimum weight matching problems, the only difference is that the goal of egalitarian bipartite matching is to find a perfect matching such that maximum edge weight (instead of the total weight) in the matching is minimized \cite{bogomolnaia2004random}.

The egalitarian bipartite matching problem is the same as minimum weight bipartite matching except the distance cost function $c_d(s(x, d))$ is the maximum edge weight in the assignment. This function is also monotone and subadditive.\\

\textbf{Metric Facility Location.}
In this problem, one is given a set of agents $\mathcal{A}$ and a set of facilities $\mathcal{F}$ such that $|\mathcal{A}| = n$,  $|\mathcal{F}| = m$. The facilities and agents lie in a metric space $d$. Each facility $F_j \in \mathcal{F}$ has an opening cost $f_j$. Each agent is assigned to a facility; in different versions there may be capacities on the number of agents assigned to a facility, lower bounds on the number of agents assigned to a facility, or various other constraints \cite{farahani2009facility}. The goal is to find a subset of facilities  $\mathcal{\hat{F}} \subseteq \mathcal{F}$ to open, so that the sum of opening costs for facilities in $\mathcal{\hat{F}}$ and total distance of the assignment is minimized.

Our framework allows arbitrary constraints on what constitutes a valid assignment, which captures facilities with capacities or lower bounds if needed. For any assignment $x$, the facility cost function $c_f(x)$ is the sum of the opening costs $f_j$ for those facilities $F_j$ that have at least one agent assigned to it. The distance cost function $c_d(s(x, d))$ is the total distances in the assignment, which is monotone increasing and additive (thus subadditive).\\

\textbf{$k$-center problem.}
The goal in this classic problem is to open a set of $k$ facilities, with each agent assigned to the closest one. The optimal solution is the subset of $\mathcal{\hat{F}}$ which minimizes $\max_{i \in \mathcal{A}} d(i, x(i))$. To express this in our framework, the constraint is that no more than $k$ facilities have agents assigned to them. For any assignment $x$, the facility cost function $c_f(x) = 0$, and the distance cost function $c_d(s(x, d))$ is the maximum distance between any agent and facility in the assignment.\\

\textbf{$k$-median problem.}
This classic problem is the same as $k$-center, except the goal is to minimize the sum of distances of agents to the facilities instead of the maximum distance.

%Given a set of agents $\mathcal{A}$, a set of facilities $\mathcal{F}$ that $|\mathcal{A}| = n$,  $|\mathcal{F}| = m$, and a parameter $k$. The facilities and agents lies in a metric space $d$. The goal is to find a subset of facilities $\mathcal{\hat{F}} \subseteq \mathcal{F}$, that $|\mathcal{\hat{F}}| = k$. Each agent $i$ is assigned to the closest facility in $\mathcal{\hat{F}}$, suppose $i$ is assigned to $F_i \in  \mathcal{\hat{F}}$, the optimal solution is the subset $\mathcal{\hat{F}}$ that minimize $\sum_{i \in \mathcal{A}} d(i, F_i)$.
%The constraint is that no more than $k$ facilities have agents assigned to them. For any assignment $x$, the facility cost function $c_f(x) = 0$, and the distance cost function $c_d(s(x, d))$ is the total distances in the assignment. $c_d$ is monotonically increasing and additive (thus subadditive).\\

\section{Distortion of Facility Assignment Problems}
\label{section-facility-assignment}

In this section, we study general facility assignment problems, as described in Section \ref{section-general-model}, and form mechanisms with small distortion. First, we construct a projected problem such that the distances between agents and facilities are known, so it could be solved by an omniscient algorithm. Then we map the result of the projected problem to the original problem and bound the distortion of the original problem.

Given agents $\mathcal{A} = \{ 1, 2, \dots, n\}$ and facilities $\mathcal{F} = \{ F_1, F_2, \dots, F_m\}$, suppose facility $F'$ is $i$'s top choice in $\mathcal{F}$. We create a new agent $\tilde{i}$ at the location of $F'$ in the metric space. Consequently, $\forall F \in \mathcal{F}$, $d(\tilde{i}, F) = d(F', F)$. Denote the set of the new agents as $\tilde{\mathcal{A}} = \{\tilde{1}, \tilde{2}, \dots, \tilde{n} \}$.

The original assignment problem is on agents $\mathcal{A}$ and facilities $\mathcal{F}$, and only ordinal preferences of agents in $\mathcal{A}$ over facilities are given. The projected problem is on agents $\tilde{\mathcal{A}}$ and facilities $\mathcal{F}$, and we know the actual distances between agents in $\tilde{\mathcal{A}}$ and facilities $\mathcal{F}$, since we know the distances $l$ between facilities. The constraints and costs $c_d$ and $c_f$ remain the same for both the original and the projected problem; the only difference is in the distances $d$. Our main result is that if we have a $\beta$-approximation assignment to the minimum assignment cost on the projected problem, then we can get an assignment that has a distortion of $2\beta + 1$ for the original problem in polynomial time.\\

%\begin{algorithm}[H]
%\caption{Algorithm for facility assignment problems.}
%\label{alg-facility-projection}
% \SetKwInOut{Input}{Input}
% \SetKwInOut{Output}{Output}
% \Input{The original facility assignment problem, \\
%        A valid assignment $\tilde{x}$ for the projected problem, with $ratio(\tilde{x}) \le \beta$.
%        }
% \Output{ Assignment $x$. }
%
% Initialize $x = \emptyset $, $\forall (\tilde{i}, F) \in \tilde{x}$, add $(i, F)$ to $x$
%
% \textbf{Final Output:} Return $x$ .
%\end{algorithm}

\begin{theorem}
\label{thm-facility-projection}
  Given a valid assignment $\tilde{x}$ for the projected problem on $\tilde{\mathcal{A}}$ and $\mathcal{F}$, with $ratio(\tilde{x}) \le \beta$, the assignment $x(i)=\tilde{x}(\tilde{i})$ has distortion of at most $(1 + 2\beta)$ for original assignment problem on $\mathcal{A}$ and $\mathcal{F}$.
\end{theorem}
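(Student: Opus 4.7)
The plan is to bound $c(x,d)$ in terms of $c(x^\ast,d)$ for an arbitrary metric $d \in \mathcal{D}(\sigma,l)$ and a true optimum $x^\ast \in \mathcal{X}$, showing $c(x,d)\le (1+2\beta)c(x^\ast,d)$ uniformly in $d$; the distortion claim then follows by taking the supremum. First I would introduce the vectors $u$ and $v$ with $u_i = d(i,\tilde{i})$ and $v_i = d(\tilde{i},x(i))$, observe that $d(i,x(i))\le u_i+v_i$ by the triangle inequality, and invoke monotonicity and then subadditivity of $c_d$ to split $c_d(s(x,d)) \le c_d(u) + c_d(v)$. I would also note that $c_f(x)=c_f(\tilde{x})$, because $c_f$ depends only on the assignment labels and $x,\tilde{x}$ encode the same mapping under the bijection $i\leftrightarrow\tilde{i}$; thus $c_d(v)+c_f(x)$ is precisely $c(\tilde{x},d)_{\mathrm{proj}}$, the cost of $\tilde{x}$ on the projected problem.

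Next I would bound the two pieces separately. For $c_d(u)$: since $\tilde{i}$ sits at agent $i$'s top-ranked facility, $u_i = d(i,\tilde{i}) \le d(i,x^\ast(i))$ by the definition of ``top choice'', and monotonicity gives $c_d(u)\le c_d(s(x^\ast,d))\le c(x^\ast,d)$. For $c(\tilde{x},d)_{\mathrm{proj}}$: the hypothesis $ratio(\tilde{x})\le\beta$ yields $c(\tilde{x},d)_{\mathrm{proj}}\le \beta\cdot\min_{x'\in\mathcal{X}} c(x',d)_{\mathrm{proj}}\le \beta\cdot c(x^\ast,d)_{\mathrm{proj}}$, where the last step uses that $x^\ast$ is valid in the projected problem as well, since the constraints do not depend on the underlying metric.

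The remaining and most delicate step is to compare $c(x^\ast,d)_{\mathrm{proj}}$ with $c(x^\ast,d)$. I would apply Lemma~\ref{lemma-social-basic} to each agent: because $\tilde{i}$ sits at $i$'s favorite facility, $d(\tilde{i},x^\ast(i)) \le 2\,d(i,x^\ast(i))$ (trivially when $\tilde{i}=x^\ast(i)$, and otherwise from $d(i,x^\ast(i))\ge d(\tilde{i},x^\ast(i))/2$). Monotonicity of $c_d$ combined with the identity $c_d(2s)=c_d(s+s)\le 2c_d(s)$, which follows from subadditivity alone, then gives $c_d(s(x^\ast,d)_{\mathrm{proj}}) \le 2c_d(s(x^\ast,d))$; adding the common facility cost (using $2a+b \le 2(a+b)$ for $b\ge 0$) yields $c(x^\ast,d)_{\mathrm{proj}}\le 2\,c(x^\ast,d)$. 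Chaining the three bounds produces $c(x,d) \le c(x^\ast,d) + 2\beta\,c(x^\ast,d)$, which is the desired $(1+2\beta)$ factor.

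The main obstacle I anticipate is the ``doubling via subadditivity'' step, since we have only assumed $c_d$ is monotone and subadditive, not positively homogeneous, so the natural reflex of writing $c_d(2s)=2c_d(s)$ is not licensed and must be obtained from $c_d(s+s)\le c_d(s)+c_d(s)$. A secondary bookkeeping concern is the facility cost $c_f$, which must be tracked separately from the distance cost throughout the argument; fortunately, because $c_f(x)=c_f(\tilde{x})$ and $c_f\ge 0$, it slides through the inequality chain cleanly and does not require any additional structural assumption beyond nonnegativity.
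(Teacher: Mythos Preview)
Your proposal is correct and follows essentially the same route as the paper: split $d(i,x(i))$ by the triangle inequality through $\tilde{i}$, use monotonicity and subadditivity of $c_d$ to separate the two pieces, bound the ``top-choice'' piece by $c_d(s(x^\ast,d))$, invoke the $\beta$-approximation guarantee on the projected instance against the feasible assignment $x^\ast$, and finish with $d(\tilde{i},x^\ast(i))\le 2\,d(i,x^\ast(i))$ together with $c_d(2s)\le 2c_d(s)$. The only cosmetic differences are that the paper derives $d(\tilde{i},x^\ast(i))\le 2\,d(i,x^\ast(i))$ directly from the triangle inequality rather than citing Lemma~\ref{lemma-social-basic}, and that by keeping $c_d(u)\le c_d(s(x^\ast,d))$ (rather than relaxing it further to $c(x^\ast,d)$) the paper records the slightly sharper intermediate bound $\beta\,c_f(x^\ast)+(1+2\beta)\,c_d(s(x^\ast,d))$, which it later exploits in the corollary about facility location; your chain still yields the stated $(1+2\beta)$ distortion.
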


\begin{proof}
  First, $\tilde{x}$ is a valid assignment for the projected problem on $\tilde{\mathcal{A}}$ and $\mathcal{F}$, so $x$ must also be a valid assignment for the original problem on $\mathcal{A}$ and $\mathcal{F}$. This is because the constraints are only on the assignment, and are independent of the metric space $d$. For the same reason, the facility cost of $x$ equals the facility cost of $\tilde{x}$, $c_f(x) = c_f(\tilde{x})$.

  Now consider the distance cost of $x$. Let $x^*$ denote the optimal assignment for the original problem. $\forall i \in \mathcal{A}$, let $s_i = d(i, x(i))$, $t_i = d(i, \tilde{i})$, $b_i = d(\tilde{i}, x(i))$. Similarly, let $s_i^* = d(i, x^*(i))$, $b_i^* = d(\tilde{i}, x^*(i))$.

  For any agent $i$ and facility $x(i)$, by triangle inequality,

  \begin{align*}
    s_i = d(i, x(i)) &\le d(i, \tilde{i}) + d(\tilde{i}, x(i)) = t_i + b_i
  \end{align*}

  Because $c_d$ is monotonically nondecreasing and subadditive,

  \begin{align*}
    c_d(s_1, s_2, \dots, s_n) &\le c_d(t_1 + b_1, t_2 + b_2, \dots, t_n + b_n) \\
                              &\le c_d(t_1, t_2, \dots, t_n) + c_d(b_1, b_2, \dots, b_n)
  \end{align*}

  Therefore, the cost of our assignment $x$ is bounded as follows:
  \begin{align*}
    c_f(x) + c_d(s(x, d)) &= c_f(x) + c_d(s_1, s_2, \dots, s_n) \\
                              &= c_f(\tilde{x}) + c_d(s_1, s_2, \dots, s_n) \\
                              &\le c_f(\tilde{x}) + c_d(t_1, t_2, \dots, t_n) + c_d(b_1, b_2, \dots, b_n)
  \end{align*}

  Because $\tilde{i}$ is located at $i$'s top choice facility, and $x^*(i)$ is a facility, we thus know that $t_i \le s_i^*$, and by monotonicity $c_d(t_1, t_2, \dots, t_n) \le c_d(s_1^*, s_2^*, \dots, s_n^*)$. Thus,

  \begin{align*}
    c_f(x) + c_d(s(x, d)) &\le c_f(\tilde{x}) + c_d(t_1, t_2, \dots, t_n) + c_d(b_1, b_2, \dots, b_n)\\
                               &\le c_f(\tilde{x}) + c_d(s_1^*, s_2^*, \dots, s_n^*) + c_d(b_1, b_2, \dots, b_n)
  \end{align*}

  We know that $\tilde{x}$ is a $\beta$-approximation to the optimum assignment for the projected problem. Its total cost is exactly $c_f(\tilde{x}) + c_d(b_1, b_2, \dots, b_n)$, since the distance from $\tilde{i}$ to $\tilde{x}(\tilde{i})=x(i)$ is exactly $b_i$. Now consider another assignment for the projected problem, in which $\tilde{i}$ is assigned to $x^*(i)$. The cost of this assignment is $c_f(x^*) + c_d(b_1^*, b_2^*, \dots, b_n^*)$, by definition of $b_i^*$. Since $\tilde{x}$ is a $\beta$-approximation, we therefore know that

  \begin{align*}
  c_f(\tilde{x}) + c_d(b_1, b_2, \dots, b_n) \leq \beta c_f(x^*) + \beta c_d(b_1^*, b_2^*, \dots, b_n^*),
  \end{align*}

  and thus

   \begin{align*}
    c_f(x) + c_d(s(x, d)) &\le c_f(\tilde{x}) + c_d(s_1^*, s_2^*, \dots, s_n^*) + c_d(b_1, b_2, \dots, b_n)\\
                    &\le c_d(s_1^*, s_2^*, \dots, s_n^*) + \beta c_f(x^*) + \beta c_d(b_1^*, b_2^*, \dots, b_n^*)
  \end{align*}

  For any agent $i$ and facility $x^*(i)$ in $x^*$, by triangle inequality,

  \begin{align*}
     b_i^* = d(\tilde{i}, x^*(i)) \leq d(i, x^*(i)) + d(i, \tilde{i}) \leq 2 d(i, x^*(i)) = 2s_i^*
 %  d(i, F_i^*) + d(i, \tilde{i}) &\ge d(\tilde{i}, F_i^*) \\
%                   2 d(i, F_i^*) &\ge d(\tilde{i}, F_i^*) \\
%                   2 s_i^* &\ge b_i^*\\
%                   2 c_d(s_1^*, s_2^*, \dots, s_n^*) & \ge c_d(b_1^*, b_2^*, \dots, b_n^*)
  \end{align*}
  $d(i, \tilde{i})\leq d(i, x^*(i))$ above since $\tilde{i}$ is located at the closest facility to $i$. Because $c_d$ is monotone and subadditive, we also have that

  \begin{align*}
  c_d(b_1^*, b_2^*, \dots, b_n^*) \leq c_d(2s_1^*, 2s_2^*, \dots, 2s_n^*) \leq 2 c_d(s_1^*, s_2^*, \dots, s_n^*)
  \end{align*}

  %If $F_i^*$ is $i$'s first choice, $d(\tilde{i}, F_i^*) = 0$, the inequality above still holds.

  %Each pair $(\tilde{i}, F_i)$ is in assignment $\tilde{x}$, which has a $\beta$-approximation to the minimum cost assignment of $\tilde{\mathcal{A}}$ and $\mathcal{F}$, thus must has a $\beta$-approximation to any valid assignment of $\tilde{\mathcal{A}}$ and $\mathcal{F}$. Consider the assignment $\{ \tilde{i}, F_i^*: i \in \mathcal{A} \}$ of $\tilde{\mathcal{A}}$ and $\mathcal{F}$,

  %\begin{align*}
%    c_f(\tilde{x}) + c_d(s(\tilde{x}, d)) &\le \beta ( c_f(x^*) + c_d(s(x^*, d) ) \\
%    c_f(\tilde{x}) + c_d(b_1, b_2, \dots, b_n) &\le \beta c_f(x^*) + \beta c_d(b_1^*, b_2^*, \dots, b_n^*)
%  \end{align*}

  Putting everything together,

  \begin{align*}
    c_f(x) + c_d(s(x, d)) %&= c_f(x) + c_d(s_1, s_2, \dots, s_n) \\
%                              &= c_f(\tilde{x}) + c_d(s_1, s_2, \dots, s_n) \\
%                              &\le c_f(\tilde{x}) + c_d(t_1, t_2, \dots, t_n) + c_d(b_1, b_2, \dots, b_n) \\
%                              &\le c_f(\tilde{x}) + c_d(s_1^*, s_2^*, \dots, s_n^*) + c_d(b_1, b_2, \dots, b_n)\\
                              &\le c_d(s_1^*, s_2^*, \dots, s_n^*) + \beta c_f(x^*) + \beta c_d(b_1^*, b_2^*, \dots, b_n^*)\\
                              &\le \beta c_f(x^*) + c_d(s_1^*, s_2^*, \dots, s_n^*) + 2\beta c_d(s_1^*, s_2^*, \dots, s_n^*)\\
                              &= \beta c_f(x^*) + (1 + 2\beta) c_d(s_1^*, s_2^*, \dots, s_n^*)\\
                              &\le (1 + 2\beta) (c_f(x^*) + c_d(s(x^*, d)))
  \end{align*}

\end{proof}

%Note that Theorem~\ref{thm-facility-projection} implies that if we have a $\beta$-approximation polynomial time omniscient algorithm to the minimum assignment cost on the projected problem, then we have an algorithm that has a distortion of $2\beta + 1$ to the original problem in polynomial time.

Note that the above theorem immediately implies that if we are only concerned with what is possible to achieve given limited ordinal information in addition to distances between facilities, and are not worried about our algorithms running in polynomial time, then we can always form an assignment with distortion of at most 3 from knowing only $\sigma$ and $l$. This is because we can solve the projected problem with brute force, and then we have $\beta=1$. This bound of 3 is tight for many facility assignment problems: consider for example an instance of min-cost metric matching with two agents and two facilities, with both preferring $F_1$ to $F_2$. One of the agents has distance to $F_1$ of 0, and one is located halfway between $F_1$ and $F_2$, but since we only have ordinal information we do not know which agent is which. If we assign the wrong agent to $F_1$, then we end up with distortion of 3, and it is impossible to do better for any deterministic mechanism.

If on the other hand we want to form poly-time algorithms with small distortion, the above theorem gives a black-box reduction: if we have a $\beta$-approximation algorithm for the omniscient case, then we can form a $1+2\beta$-distortion algorithm for the ordinal case. Actually, we get a $1+2\beta$-distortion for the distance cost, and a $\beta$-distortion for the facility cost, which is shown in the second-to-last line of the proof for Theorem~\ref{thm-facility-projection}. This leads to the following corollaries:

\begin{corollary}
We can achieve the following distortion in polynomial time:
\begin{enumerate}
\item At most 3 for the minimum weight bipartite matching problem.
\item At most 3 for Egalitarian bipartite matching.
\item At most 3.976 for the facility location problem (1.488-approximation for the facility cost, and 3.976-approximation for the distance cost).
\item At most 5 for the k-center problem.
\item At most 6.35 for the k-median problem.
%Apply the Hungarian algorithm to the projected problem, Algorithm~\ref{alg-facility-projection} gives a 3-approximation to the original minimum weight bipartite matching problem.
\end{enumerate}
\end{corollary}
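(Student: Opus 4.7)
The plan is to apply the black-box reduction of Theorem~\ref{thm-facility-projection} to each of the five problems, using a best-known omniscient polynomial-time approximation algorithm as the subroutine on the projected instance. Recall that each of the problems listed was already verified in Section~\ref{subsection-examples} to fit into the facility assignment framework, i.e., the distance cost function $c_d$ is monotone and subadditive (sum and max both are), and the facility cost $c_f$ depends only on the assignment $x$, not on the metric. Thus the hypotheses of Theorem~\ref{thm-facility-projection} are satisfied in every case, and it suffices to plug in an appropriate $\beta$.

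For the concrete plan: first I would form the projected instance by replacing each agent $i$ with $\tilde{i}$ placed at $i$'s top-choice facility. This is computable in polynomial time since only the top choice of each agent is needed. In the projected instance all distances between $\tilde{\mathcal{A}}$ and $\mathcal{F}$ are equal to inter-facility distances $l$, which are known exactly, so the projected instance is a standard omniscient input. Then I would run the appropriate known algorithm on this instance: for minimum weight bipartite matching, the Hungarian algorithm gives $\beta=1$ (and likewise egalitarian bipartite matching admits a polynomial exact algorithm, giving $\beta=1$, so $1+2\beta=3$ in both cases); for metric facility location, the algorithm of \cite{li20111} gives $\beta=1.488$; for $k$-center, the Gonzalez/Hochbaum--Shmoys algorithm \cite{hochbaum1985best} gives $\beta=2$; and for $k$-median, the algorithm of \cite{byrka2014improved} gives $\beta=2.675$. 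Finally, I would set $x(i) := \tilde{x}(\tilde{i})$ and invoke Theorem~\ref{thm-facility-projection} to conclude that the resulting assignment has distortion at most $1+2\beta$, yielding the bounds $3, 3, 3.976, 5, 6.35$ respectively.

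There is essentially no obstacle beyond bookkeeping, since the heavy lifting is already done in Theorem~\ref{thm-facility-projection}. The only subtlety worth highlighting is the facility location entry, where the corollary splits the guarantees into a $1.488$-approximation for the facility cost and a $3.976$-approximation for the distance cost. This refinement does not follow from the single bound $c(x,d)\le (1+2\beta)c(x^*,d)$ of Theorem~\ref{thm-facility-projection}, but rather from its intermediate inequality
\begin{equation*}
c_f(x) + c_d(s(x,d)) \;\le\; \beta\, c_f(x^*) + (1+2\beta)\, c_d(s(x^*,d)),
\end{equation*}
which appears in the second-to-last line of that proof and is explicitly flagged in the text preceding the corollary. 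Thus to finish this item I would simply cite this sharper inequality with $\beta=1.488$, giving the per-component bounds in the parenthetical remark.
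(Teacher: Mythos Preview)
Your proposal is correct and matches the paper's own proof essentially line for line: apply Theorem~\ref{thm-facility-projection} with the appropriate omniscient $\beta$-approximation on the projected instance (exact algorithms for the two matching variants, Li's $1.488$ for facility location, Hochbaum--Shmoys' $2$ for $k$-center, Byrka et al.'s $2.675$ for $k$-median), and read off the per-component facility-location guarantee from the penultimate inequality in that proof. The only detail the paper adds that you do not make explicit is that the projected agents sit at facility locations, so the $k$-center and $k$-median subroutines are being invoked in the ``clients are a subset of facilities'' regime; but this is already implicit in your description of the projected instance.
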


\begin{proof}
Min-cost matching and egalitarian matching are poly-time solvable, so $\beta=1$. For the latter, one can fix the threshold weight $t$ such that every edge chosen should be at most $t$, and then determine if such a matching exists. Performing a binary search on $t$ gives an efficient algorithm. For facility location, one can use the omniscient algorithm which is a 1.488-approximation in \cite{li20111}. For the $k$-center problem, a greedy algorithm \cite{hochbaum1985best} gives a 2-approximation for the setting that agents are a subset of facilities, which is the case in our projected problem. \cite{byrka2014improved} gives a 2.675-approximation omniscient algorithm for the $k$-median problem when agents are a subset of facilities, thus it also gives a 2.675-approximation for our projected problem.
\end{proof}

Note that the median function, unlike sum and maximum, is not subadditive, and thus does not fit into our framework. In fact, while both min-cost and egalitarian matching problems have algorithms with small distortion in our setting, the same is not possible for forming a matching where the objective function is the cost of the {\em median} edge: see the Appendix for a lower bound.

\section{Conclusion}
In this paper, we provided two mechanisms to solve different social cost problems. The first one makes use of the distances between facilities and an omniscient algorithm to get a low distortion for general facility assignment problems. The second mechanism is a new voting rule for social choice which simultaneously achieves a distortion of 3 for many objectives, including the cost of the median voter, and a distortion of 5 for the total social cost at the same time. The first mechanism requires the full distances $l$, but only needs the top choice from each agent. Thus, it puts only a small load on the agents which submit their preferences, but requires the mechanism designer to collect more information about the facilities and their distances to each other. The second mechanism, on the other hand, only requires ordinal preference information from the facilities, but needs the full preference ranking from the agents instead of just the top choice. It is especially appropriate for settings in which the candidates or alternatives are agents themselves.

Many open questions remain for our setting. How well can facility location problems be approximated given information about facilities? While we established upper bounds on distortion, we have no lower bounds besides the trivial bound of 3. What about randomized mechanisms, or what if the mechanisms must be truthful? And more generally, exactly what information is enough to guarantee mechanisms with small distortion? While our results show that knowing information about facility locations is enough to result in small distortion, it may be possible that obtaining even a bit of targeted information would result in powerful approximation algorithms. We look forward to future work on this topic.

\bibliographystyle{plain}
\bibliography{ref}

\newpage
\appendix
\section{Bad Examples and Lower Bounds}

Note that our Algorithm \ref{alg-social-median} is only for social choice problems, and does not fit in the definition of our general facility assignment problems. This is because the median cost function, unlike sum and maximum, is not subadditive. In fact, while both min-cost and egalitarian matching problems have algorithms with small distortion in our setting, the same is not possible for forming a matching where the objective function is the cost of the {\em median} edge.
%Note that there are some related problems that has a infinite distortion for any algorithm, thus our mechanisms would not work on these problems.
%For example, consider the following median objective bipartite matching problem: Given a set of agents $\mathcal{A}$ and a set of facilities $\mathcal{F}$ that $|\mathcal{A}| = |\mathcal{F}| = n$. $G = (\mathcal{A}, \mathcal{F}, E)$ is an undirected complete bipartite graph. Let the edge weight between an agent $i$ and a facility $F_j$ be the distance between them, i.e., $w(i, F_j) = d(i, F_j)$. The target is to form a matching $M$, that the median of the edge weight in $M$ is minimized. We assume the agents' preferences profile and the distances between facilities are given.

\begin{theorem}
The worst-case distortion of the median-cost bipartite matching problem in a metric space (given both agent preference profiles and distances between facilities) is unbounded.
\end{theorem}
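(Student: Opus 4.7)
The plan is to exhibit, for every positive $\epsilon$, a single instance $(\sigma, l)$ on which any deterministic matching algorithm is forced to produce a matching whose median cost is a factor of at least $1/\epsilon$ worse than the true optimum; sending $\epsilon \to 0$ then delivers unbounded distortion.

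I would take $n = 3$ with the facilities placed on the real line at $F_1 = 0$, $F_2 = \epsilon$, $F_3 = 1$, so that $l(F_1,F_2) = \epsilon$, $l(F_2,F_3) = 1-\epsilon$, $l(F_1,F_3) = 1$. Every agent reports the same preference ranking $F_1 \succ F_2 \succ F_3$. Because the preference profile is completely symmetric in the agents, the algorithm has no way to tell them apart from $(\sigma, l)$ alone; whatever perfect matching $x$ it commits to, exactly one of the three agents is sent to $F_3$, and hence some agent $i^{*}$ has $x(i^{*}) \in \{F_1, F_2\}$. I would then let the adversary inspect $x$ and build the metric $d$ by placing $i^{*}$ at position $-M$ for a large constant $M > 1$ and the other two agents at position $0$. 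A quick check confirms that $d \in \mathcal{D}(\sigma, l)$: the agent at $-M$ sees distances $M, M+\epsilon, M+1$ to $F_1, F_2, F_3$ respectively, the two agents at $0$ see $0, \epsilon, 1$, and the facility-facility distances match $l$, so all three agents still prefer $F_1 \succ F_2 \succ F_3$.

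Under this $d$, the three matched distances produced by $x$ are either $\{M, \epsilon, 1\}$ (if $x(i^{*}) = F_1$) or $\{0, M+\epsilon, 1\}$ (if $x(i^{*}) = F_2$); in either case the median, i.e.\ the middle of three values, is $1$. Meanwhile, the alternative matching $x^{*}$ that routes $i^{*}$ to $F_3$ and the two origin-agents to $F_1$ and $F_2$ uses distances $\{M+1, 0, \epsilon\}$, with median $\epsilon$. Consequently $c(x, d) / \min_{x'} c(x', d) \geq 1/\epsilon$, so the distortion of the algorithm's output on $(\sigma, l)$ is at least $1/\epsilon$, which tends to infinity as $\epsilon \to 0$.

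The steps that need real care are (i) verifying metric and ordinal consistency of the adversarial placement (immediate in this 1-D construction since $M < M+\epsilon < M+1$) and (ii) justifying that the adversary is allowed to choose $i^{*}$ after seeing $x$, which is fine because distortion is defined as the supremum over $d \in \mathcal{D}(\sigma, l)$ for a fixed algorithm output. The main conceptual hurdle is recognizing that the median objective, unlike the sum or maximum, allows a single misplaced matched pair to dominate the cost while the bulk of the matching stays small; this is precisely why the $1$-vs-$\epsilon$ gap opens up between $x$ and $x^{*}$, and why the subadditivity-based reduction from Section~\ref{section-facility-assignment} cannot rescue the median case.
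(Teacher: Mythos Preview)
Your proof is correct and uses the same core idea as the paper: exploit agents that are indistinguishable from $(\sigma,l)$ alone, then let the adversary pick the metric after seeing the algorithm's committed matching so that the agent who ``should'' have gone to the far facility was instead matched nearby. The paper's construction differs in detail---it makes only two of the three agents share a preference profile (with the third agent anchoring $Z$ far away), whereas you make all three agents identical and put the facilities on a line---but the mechanism is the same, and your version is arguably tidier since it handles every possible output matching uniformly rather than implicitly relying on the remaining matchings being ruled out by the large $l(X,Z)$ distance.
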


\begin{proof}
Consider the following example: there are three agents $a$, $b$, $c$, and three facilities $X$, $Y$, $Z$. The preferences of agents are: $a, b \in XYZ$, while $c \in ZXY$. The distances between facilities are: $l(X, Y) = 2$, $l(X, Z) = l(Y, Z) = 1000$. The distances between the agents and facilities are, of course, unknown. Consider the instance $d(c, Z) = \epsilon$, $d(a, X) = 2\epsilon$, and $d(b, X) = d(b, Y) = 1$. $\epsilon$ is a very small positive real number, and other distances not given obey triangle inequality. In this instance, the optimal solution is $x^* = \{ (a, X), (b, Y), (c, Z) \}$, which gives a median value of $2\epsilon$. But because $a$ and $b$ have the same preference profile, the instance could also be $d(c, Z) = \epsilon$, $d(b, X) = 2\epsilon$, and $d(a, X) = d(a, Y) = 1$. If we still return the assignment $x^*$ for this instance, the median would be $1$. The distortion is arbitrarily bad when $\epsilon$ approaches 0.
\end{proof}

The following Theorems show some of the lower bounds mentioned in Table~\ref{table_results}.

\begin{theorem}
The worst-case distortion for the facility location problem in a metric space (given only agents' preference profiles) is unbounded.
\end{theorem}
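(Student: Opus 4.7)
The plan is to exhibit two metric instances of facility location that share the same ordinal preference profile but whose optima differ by an arbitrary factor, so that any deterministic algorithm which sees only the profile is forced into unbounded distortion on at least one of them. Concretely, I would take a single agent $a$ and two facilities $F_1, F_2$ with opening costs $f_1 = 1$ and $f_2 = 0$. In both instances the only preference information is $F_1 \succ_a F_2$, so the algorithm must commit to a single output regardless of which underlying metric is in play: either it opens $\{F_1\}$, opens $\{F_2\}$, or opens both.

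For the first instance $I_1$, set $d(a, F_1) = \varepsilon$ and $d(a, F_2) = 2\varepsilon$ for a tiny $\varepsilon > 0$. The optimum opens only $F_2$ at total cost $2\varepsilon$, while any solution that opens $F_1$ pays at least the opening cost $1$. For the second instance $I_2$, set $d(a, F_1) = 1/M$ and $d(a, F_2) = M$ for a large $M$. The optimum opens only $F_1$ at cost $1 + 1/M$, whereas opening only $F_2$ costs $M$. Both metrics are trivially valid since only three points are involved, and both are consistent with the preference $F_1 \succ_a F_2$.

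The case analysis is then immediate. If the algorithm opens $F_1$ (alone or together with $F_2$), its cost on $I_1$ is at least $1 + \varepsilon$ against an optimum of $2\varepsilon$, giving distortion at least $1/(2\varepsilon)$. If instead the algorithm opens only $F_2$, its cost on $I_2$ is $M$ against an optimum of $1 + 1/M$, giving distortion $\Omega(M)$. Sending $\varepsilon \to 0$ and $M \to \infty$ drives the ratio above any constant.

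I do not expect a genuine obstacle here: the argument is a standard adversary construction exploiting the single bit of freedom the algorithm has, namely whether to open $F_1$. The one subtlety worth flagging is that the opening costs $(f_1, f_2)$ must be part of the adversary's construction but must be held common across the two instances, so that $I_1$ and $I_2$ really are indistinguishable to the algorithm given only the ordinal preferences.
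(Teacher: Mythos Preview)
Your argument is correct and follows the same adversary-construction approach as the paper: fix the opening costs and the ordinal profile, then exhibit two consistent metrics so that whichever facility the algorithm commits to, one of them yields unbounded ratio. Your instance is actually a bit cleaner than the paper's, which uses two agents and two facilities with opening costs $1$ and $100$; with a single agent (and hence only two possible assignments in this framework, so your ``open both'' case is vacuous) the case analysis is slightly shorter but otherwise identical in structure.
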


\begin{proof}
Consider the following example: there are two agents 1, 2, and two facilities $X$, $Y$. Agent 1 prefers $X$ to $Y$, while agent 2 prefers $Y$ to $X$. The opening costs are: $c_f(X) = 1$, $c_f(Y) = 100$. We can choose to open one facility or both of them.

\textbf{Case 1.} Suppose we only open $X$. Consider the following distances between the agents and facilities: $d(1, X) = d(2, Y) = 1$, $d(1, Y) = d(2, X) = L$, for some very large $L$. If we only open $X$, then the total cost is $>L$. While the optimal solution is to open both $X$ and $Y$, which has a total cost of 103. The distortion is unbounded.

\textbf{Case 2.} Suppose we only open $Y$. Consider the same distances as in \textbf{Case 1}, then the total cost is also $L$. And the optimal solution still has a total cost of 103. The distortion is unbounded.

\textbf{Case 3.} Suppose we open both facilities. Consider the following distances between the agents and facilities: $d(1, X) = d(1, Y) = d(2, X) = d(2, Y) = \epsilon$, where $\epsilon$ is a very small positive real number. If we open both facilities, the total cost is $101 + 2\epsilon$. While the optimal solution is to only open $X$ , which has a total cost of $1 + 2\epsilon$. If we increase $c_f(Y)$, the approximation ratio is unbounded.
\end{proof}

\begin{theorem}
The worst-case distortion for the k-median problem in a metric space (given only agents' preference profiles) is at least $\Omega(n)$.
\end{theorem}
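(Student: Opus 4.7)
The plan is to exhibit, for each $n$, an instance of the $k$-median problem with $\Theta(n)$ agents on which every deterministic ordinal mechanism $f$ must incur distortion $\Omega(n)$. The overall strategy is the standard adversarial one: fix any deterministic $f$; present an ordinal profile $\sigma$ on which $f$ must return some specific output; then produce a metric consistent with $\sigma$ whose true optimum is cheaper than $f$'s output by a factor linear in $n$.

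Concretely, I plan to use $n$ agents together with roughly $n$ facilities and a value of $k$ chosen large enough (for instance $k = \Theta(n)$) that the space of plausible outputs is rich. The preference profile $\sigma$ will be built with a permutation-symmetric structure, in which several groups of facilities look interchangeable from the ordinal viewpoint of the agents. The underlying metric is then free to single one group out as the ``cheap'' one, which forces any deterministic mechanism to commit to an output without enough ordinal information to identify that cheap group. This is analogous to the clustered hard instances used for ordinal matching lower bounds in \cite{caragiannis2016truthful}, but adapted so that the linear-in-$n$ number of chosen facilities amplifies the per-agent error into a linear-in-$n$ ratio.

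Given the output of $f$, I will construct a metric $d \in \mathcal{D}(\sigma)$ in which (i) the $k$ facilities opened by $f$ sit at distance $\Theta(n)$ from most agents they must serve, while (ii) an alternative set of $k$ facilities (the ``cheap'' group) lies at distance $O(1)$ from each of its agents. Summing over the $\Theta(n)$ agents that are forced to pay the inflated distance gives a mechanism cost of $\Theta(n^{2})$ against an optimum of $\Theta(n)$, yielding the claimed $\Omega(n)$ distortion.

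The main technical obstacle is satisfying three constraints simultaneously: $d$ must be consistent with $\sigma$, $d$ must obey the triangle inequality on $\mathcal{A} \cup \mathcal{F}$, and the cost gap must grow linearly in $n$. The triangle inequality is the delicate one, since inflating the distances from agents to the ``wrong'' facilities by a factor of $n$ risks creating shortcuts either through common agents or through facilities of the cheap group. Because this is the pure ordinal setting and no inter-facility distances $l$ are published, the adversary has considerable freedom in how to embed the facilities (for instance in a high-dimensional space, or on a tree-like metric), and I plan to exploit this freedom to keep the two groups of facilities mutually far apart while individually honoring each agent's ordinal list. Once such an embedding is fixed, the distortion bound reduces to a routine summation.
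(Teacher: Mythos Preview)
Your proposal is not yet a proof: it is a plan that never actually exhibits an instance or verifies any of the claimed cost bounds. More importantly, the specific plan you sketch---a permutation-symmetric profile with several interchangeable groups of facilities, one of which the adversary makes ``cheap'' after seeing $f$'s output---runs into exactly the triangle-inequality obstacle you flag but do not resolve. If the agents assigned to the cheap group are at distance $O(1)$ from it, and the agents assigned to $f$'s opened group are (by symmetry of $\sigma$) also at distance $O(1)$ from \emph{their} top facilities, then the inter-group facility distances are forced to be $O(1)$ as well via those agents, which in turn caps the cost any agent pays for being routed to the wrong group at $O(1)$, not $\Theta(n)$. A purely symmetric construction does not create the $\Omega(n)$ gap; you need some asymmetry that the ordinal profile cannot detect.

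The paper's proof is far simpler and avoids this trap entirely. It uses only \emph{three} facilities $X,Y,Z$ with $k=2$, and $n=2q+1$ agents split asymmetrically: $q$ agents have top choice $X$, $q$ have top choice $Y$, and a single agent has top choice $Z$. Any deterministic mechanism must choose one of the three pairs. If it chooses $\{X,Y\}$, the adversary places $Z$ arbitrarily far away, giving unbounded distortion. If it chooses $\{X,Z\}$ or $\{Y,Z\}$, the adversary puts all three facilities at mutual distance $1$ with each agent sitting on its top choice; then the $q$ agents whose top facility was not opened each pay $1$ (total cost $q$), whereas opening $\{X,Y\}$ would cost only $1$ (the lone $Z$-agent). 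Either way the distortion is at least $q=\Omega(n)$. The crucial idea you are missing is this deliberate \emph{asymmetry} in supporter counts: the mechanism cannot tell from $\sigma$ whether the singleton facility $Z$ is far (and must be opened) or near (and is a wasted slot).
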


\begin{proof}
Consider the following example: There are three facilities $X$, $Y$, and $Z$. There are $q$ agents who prefer $X$ to $Y$ to $Z$, $q$ agents who prefer $Y$ to $X$ to $Z$, and 1 agent who prefers $Z$ to $X$ to $Y$. We denote these three sets of agents as $\mathcal{A}_X$, $\mathcal{A}_Y$ and $\mathcal{A}_Z$ separately. Suppose $k=2$, then we have three choices of the winners:

\textbf{Case 1.} Choose $X, Y$ as the winners. Consider the following distances between agents and facilities: $d(X, Y) = 1$, $d(Y, Z) = d(X, Z) = L$ for some very large $L$. $\mathcal{A}_X$ is located at the same location as $X$, $\mathcal{A}_Y$ is at the same location as $Y$, and $\mathcal{A}_Z$ is at the same location as $Z$. The cost of choosing $X, Y$ as the winners is $L$ because we need to assign the agent in $\mathcal{A}_Z$ to $X$ or $Y$. While the optimal solution is to choose $Y, Z$ as the winners, and get a total cost of $q$. So the distortion in this case is unbounded.

\textbf{Case 2.} Choose $X, Z$ as the winners. Consider the following distances between agents and facilities: $d(X, Y) = d(Y, Z) = d(X, Z) = 1$, and $\mathcal{A}_X$ locate on top of $X$, $\mathcal{A}_Y$ locate on top of $Y$, and $\mathcal{A}_Z$ locate on top of $Z$. The cost of choosing $X, Z$ as the winner is $q$, while the optimal solution is to choose $X, Y$ as the winners, and get a total cost of $1$. The distortion is $q$ in this case.

\textbf{Case 3.} Choose $Y, Z$ as the winners. Consider the same distances as in \textbf{Case 2}. If we choose $Y, Z$ as the winners, the total cost is still $q$, and the distortion of this case is also $q$.

The total number of agents is $n = 2q + 1$, so we can conclude that the distortions in all these three cases are at least $\Omega(n)$.
\end{proof}

\begin{theorem}
The worst-case distortion of the egalitarian bipartite matching problem in a metric space (given only agents' preference profiles) is at least $2$.
\end{theorem}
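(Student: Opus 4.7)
The plan is to exhibit a tiny instance (two agents, two facilities) in which the ordinal information leaves essentially one free bit, but the underlying metric can be chosen adversarially so that whichever of the two possible matchings the mechanism outputs, there is a consistent metric under which the other matching is twice as good in the egalitarian (max-edge) objective.

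Concretely, I would take $\mathcal{A} = \{a,b\}$, $\mathcal{F} = \{X,Y\}$, and the preference profile in which both $a$ and $b$ strictly prefer $X$ to $Y$. Since preferences say nothing about which agent is ``closer'' to $X$, any deterministic ordinal mechanism must commit to one of the two perfect matchings, say $M_1 = \{(a,X),(b,Y)\}$ or $M_2 = \{(a,Y),(b,X)\}$. I will then write down two metrics $d_1, d_2$ that both induce the given preference profile but are ``mirror images'' of each other: in $d_1$, agent $a$ is essentially on top of $X$ while $b$ sits roughly on $X$ too but with $d_1(b,Y)\approx 2$, so $M_1$ has max-edge $\approx 1$ while $M_2$ has max-edge $\approx 2$; in $d_2$ the roles of $a$ and $b$ are swapped, so $M_2$ has max-edge $\approx 1$ while $M_1$ has max-edge $\approx 2$. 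A small $\epsilon,\delta$ perturbation (e.g.\ $d(a,X)=\epsilon$, $d(a,Y)=2$, $d(b,X)=1-\delta$, $d(b,Y)=1$ for $d_1$, and the symmetric version for $d_2$) makes the preferences strict and leaves room for $d(X,Y)$ to satisfy the triangle inequality.

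Given these two metrics, I would argue: whichever matching the mechanism returns on the shared ordinal input, one of $d_1,d_2$ realizes distortion arbitrarily close to $2$. Specifically, if the mechanism outputs $M_1$, evaluate it under $d_2$; if it outputs $M_2$, evaluate it under $d_1$. In either case the egalitarian cost of the returned matching tends to $2$ while the optimal matching for that metric has egalitarian cost tending to $1$, so $\mathrm{dist}(x,\sigma)\to 2$ as $\epsilon,\delta\to 0$.

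The only subtlety, and the main thing to verify carefully, is that both metrics are actually metrics consistent with the preference profile; that is, that the chosen distance $d(X,Y)$ satisfies all four triangle inequalities (between $\{a,b,X,Y\}$) simultaneously with strictness of both agents' preferences for $X$. This is a routine check: the constraints on $d(X,Y)$ reduce to $d(X,Y)\in[2-\epsilon,\,2-\delta]$, which is non-empty whenever $\delta\le\epsilon$. Once that is in hand, the lower bound of $2$ follows immediately from the adversarial argument above, matching the entry in Table~\ref{table_results}.
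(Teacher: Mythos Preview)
Your proposal is correct and follows essentially the same approach as the paper: a two-agent, two-facility instance where both agents prefer $X$ to $Y$, so the mechanism cannot tell which agent to send to $X$, and an adversarial metric then forces distortion $2$. The paper's version is slightly terser—it uses a single metric with $d(1,X)=d(1,Y)=1$ (a tie, which is still consistent with $X\succ_1 Y$) and a WLOG symmetry argument—whereas you make the preferences strict via the $\delta$ perturbation, explicitly present both mirror metrics, and check feasibility of $d(X,Y)$; these are cosmetic differences, not a different route.
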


\begin{proof}
Consider the following example: there are two agents 1, 2, and two facilities $X$, $Y$. Both agents prefer $X$ to $Y$. W.L.O.G., assume we match agent 1 to $X$, and agent 2 to $Y$. Suppose the distances between agents and facilities are: $d(1, X) = d(1, Y) = 1$, $d(2, X) = \epsilon$, $d(2, Y) = 2$, where $\epsilon$ is a very small positive real number. The egalitarian cost of our matching is $2$, while the optimal solution is to match agent 1 to $Y$, and agent 2 to $X$, which has a cost of $1$.
\end{proof}
\end{document}